\documentclass[sigconf]{acmart}
\AtBeginDocument{%
  }

\setcopyright{acmlicensed}
\copyrightyear{2018}
\acmYear{2018}
\acmDOI{XXXXXXX.XXXXXXX}
\acmConference[Conference acronym 'XX]{Make sure to enter the correct
  conference title from your rights confirmation email}{June 03--05,
  2018}{Woodstock, NY}
\acmISBN{978-1-4503-XXXX-X/2018/06}

\usepackage{cleveref}
\usepackage{multirow}
\usepackage{algorithm}
\usepackage{algpseudocode}

\usepackage{subfigure}  
\usepackage{bbm}  
\usepackage{enumitem}
\usepackage{diagbox} 
\usepackage[normalem]{ulem} 

\usepackage{todonotes}

\usepackage{thmtools}  

\usepackage{makecell}

\usepackage{tcolorbox}

\newcommand{\nop}[1]{}

\newcommand{\pend}[1]{{\color{black} #1}}

\newcommand{\arev}[1]{{\color{black} #1}}

\newtheorem{remark}{Remark}

\newcommand{\figurewidthone}{41mm}  
\newcommand{\figurewidthfour}{39.5mm}  
\newcommand{\figurewidthfive}{40.1mm}  

\newcommand{\figurehreduce}{0mm}

\newcommand{\figurehreducetest}{-3.5mm}

\newcommand{\figurevreduce}{0mm}
\newcommand{\figurehreducenew}{-1.6mm}

\newcommand{\thicktoprule}{\specialrule{1.2pt}{0pt}{1.5pt}}
\newcommand{\thickmidrule}{\specialrule{1.2pt}{1.5pt}{1.5pt}}
\newcommand{\thickbottomrule}{\specialrule{1.2pt}{1.5pt}{0pt}}

\begin{document}

\title{RaMark: Radioactive Watermarking for Generated Tabular Data}



\author{Xin Che}
\affiliation{%
  \institution{McMaster University}
  \city{Hamilton}
  \country{Canada}
}
\email{chex5@mcmaster.ca}

\author{Lingyang Chu}
\affiliation{%
  \institution{McMaster University}
  \city{Hamilton}
  \country{Canada}
}
\email{chul9@mcmaster.ca}

\author{Qiqi Zhang}
\affiliation{%
  \institution{McMaster University}
  \city{Hamilton}
  \country{Canada}
}
\email{zhangq16@mcmaster.ca}

\author{Xinyu Ma}
\affiliation{%
  \institution{McMaster University}
  \city{Hamilton}
  \country{Canada}
}
\email{ma209@mcmaster.ca}

\author{Xuan Luo}
\affiliation{%
  \institution{York University}
  \city{Toronto}
  \country{Canada}
}
\email{xuanluo@yorku.ca}

\author{Jian Pei}
\affiliation{%
  \institution{Duke University}
  \city{Durham}
  \country{United States}
}
\email{j.pei@duke.edu}

\renewcommand{\shortauthors}{Anonymous}

\begin{abstract}
Recent advances in generative modeling have made generated tabular data a practical solution for privacy-sensitive data sharing, where watermarking enables ownership verification. However, existing watermarking methods fundamentally fail under retraining attacks, in which an adversary retrains a generative model on a watermarked dataset and regenerates high-utility data that no longer carries the watermark. We address this challenge by introducing \textbf{radioactivity}, the property that a watermark remains detectable after generative model retraining, and propose \textbf{RaMark}, a radioactive watermarking method that embeds a sinusoidal dependency as an intrinsic component of the data distribution. By coupling the watermark with the underlying distribution, RaMark ensures that any generative model preserving data utility also has to preserve the watermark. We theoretically show that with high probability removing watermark degrades utility and alters data distribution. Extensive experiments on two real-world tabular datasets, under a large-scale ownership verification setting with $10^5$ independent data owners, demonstrate that RaMark achieves substantially stronger radioactivity than seven state-of-the-art methods and consistently outperforms them against both retraining and data modification attacks.
\end{abstract}

\begin{CCSXML}
<ccs2012>
 <concept>
  <concept_id>00000000.0000000.0000000</concept_id>
  <concept_desc>Do Not Use This Code, Generate the Correct Terms for Your Paper</concept_desc>
  <concept_significance>500</concept_significance>
 </concept>
 <concept>
  <concept_id>00000000.00000000.00000000</concept_id>
  <concept_desc>Do Not Use This Code, Generate the Correct Terms for Your Paper</concept_desc>
  <concept_significance>300</concept_significance>
 </concept>
 <concept>
  <concept_id>00000000.00000000.00000000</concept_id>
  <concept_desc>Do Not Use This Code, Generate the Correct Terms for Your Paper</concept_desc>
  <concept_significance>100</concept_significance>
 </concept>
 <concept>
  <concept_id>00000000.00000000.00000000</concept_id>
  <concept_desc>Do Not Use This Code, Generate the Correct Terms for Your Paper</concept_desc>
  <concept_significance>100</concept_significance>
 </concept>
</ccs2012>
\end{CCSXML}

\ccsdesc[500]{Do Not Use This Code~Generate the Correct Terms for Your Paper}
\ccsdesc[300]{Do Not Use This Code~Generate the Correct Terms for Your Paper}
\ccsdesc{Do Not Use This Code~Generate the Correct Terms for Your Paper}
\ccsdesc[100]{Do Not Use This Code~Generate the Correct Terms for Your Paper}

\keywords{Radioactive watermark, retraining attack, generated tabular data}

\received{20 February 2007}
\received[revised]{12 March 2009}
\received[accepted]{5 June 2009}

\maketitle

\section{Introduction}
\label{sec:intro}
Recent advances~\cite{kotelnikov2023tabddpm, zhang2023mixed,du2024systematic,bauer2024comprehensive,wang2024moderator,lv2025rag,wang2024dye4ai} in generative modeling 
\pend{enable organizations} to generate and share high-quality tabular datasets instead of releasing original data.
In many domains involving sensitive information, such as healthcare~\cite{cappiello2022enabling,ceritli2023synthesizing,qian2024synthetic}, finance~\cite{sattarov2023findiff,oyewole2024data,potluru2023synthetic}, and governance~\cite{alzamil2019new,singh2020towards}, directly sharing original data is restricted by privacy regulations, compliance requirements, or contractual obligations~\cite{stoian2025survey,sattarov2023findiff,wang2024harmonic,ceritli2023synthesizing,liu2022privacy}.
To enable data sharing and trading in these scenarios, organizations increasingly rely on generative models to produce high-quality generated tabular data that approximate the original data distribution while preserving privacy~\cite{foo2025ai,qian2023synthcity,wang2025generative,lee2025removal}. These generated datasets are widely used for downstream machine learning tasks~\cite{stoian2025survey,wang2024harmonic,lee2025removal} and are often redistributed across multiple parties, making them valuable digital assets that require protection of ownership and effective traceability to identify the exact source of unauthorized redistribution.


As generated datasets are increasingly reused, redistributed, and licensed to multiple parties, dataset owners often have a strong need to establish ownership and enable 
\pend{traceability} when misuse occurs.
Data watermarking~\cite{he2024watermarking,zheng2024tabularmark,zhao2023recipe,wen2023tree} provides a principled mechanism to protect ownership by embedding a detectable secret signal into the dataset, which enables ownership verification, misuse detection, and traceability after data sharing or redistribution.

\begin{figure}[t]
    \centering
    \includegraphics[width=\linewidth]{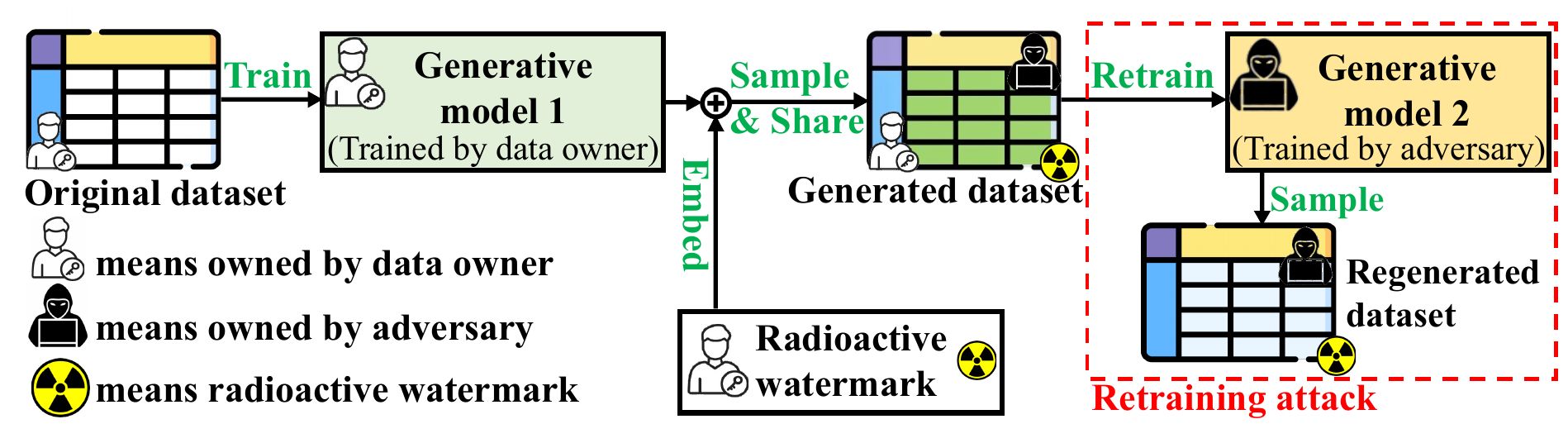}
    \caption{Illustration of retraining attack and radioactive watermark. The data owner trains a generative model on the original dataset and samples a watermarked dataset that carries a radioactive watermark. An adversary performs a retraining attack by training a new generative model on the watermarked dataset and sampling a regenerated dataset. Because the watermark is radioactive, it remains detectable in the regenerated dataset despite retraining.}
    \Description{}
    \label{fig:intro}
\end{figure}

However, the same advances in generative modeling~\cite{ho2020denoising,kindji2024under,shi2025comprehensive} that enable high-quality data sharing also equip adversaries with a powerful watermark-removing attack, namely the \textbf{retraining attack}.
As shown in \Cref{fig:intro}, an adversary retrains a generative model (Generative model~2) on a watermarked dataset and then samples a regenerated dataset from the retrained model.
Because the retrained model learns distributional structure rather than memorizing individual data samples~\cite{koo2023survey,cao2024survey,fernandez2023stable}, watermark signals embedded via record-level modifications, record identification, or generation-time heuristics are often not reproduced in regenerated data.
At the same time, the regenerated dataset can still closely approximate the original data distribution, enabling the adversary to remove the watermark without substantially degrading data utility.
This makes retraining attacks a realistic and critical threat to the watermark's security.

Retraining attacks pose a serious threat to existing watermarking methods.
Most prior approaches embed watermark signals through per-sample modifications, record-level identification, or generation-time procedures that are not captured as learnable dependencies in the data distribution.
When a generative model is retrained on a watermarked dataset, such signals are therefore not reproduced in the regenerated dataset, while the regenerated dataset still retains high utility.
This ability to remove watermarks without sacrificing much data utility exposes a fundamental weakness of existing watermarking methods for generated tabular data.

To address this weakness, we design watermark signals that \emph{persist} through generative model retraining and remain detectable in regenerated data.
Radioactive watermarks have been well established in image and model watermarking to describe signals that remain detectable after training, fine-tuning, or model reuse~\cite{sablayrolles2020radioactive,sander2024watermarking,ghali2025radioactive,li2025hmark}. 
In this paper, we extend radioactivity to the tabular data setting, where the challenge is to make the watermark survive distribution-learning and resampling.
Guided by this notion, we propose a novel watermarking method named \textbf{Radioactive Watermark (RaMark)}, which achieves strong radioactivity against retraining attacks while maintaining high robustness against many data modification attacks that attempt to remove the watermark by directly modifying the data samples in the released dataset.
We summarize our main contributions as follows.

First, we introduce watermark radioactivity for tabular datasets as a security property and propose RaMark as a novel watermarking method that achieves strong radioactivity against retraining attacks.
The key idea of RaMark is to embed a sinusoidal dependency among continuous-valued data attributes as an intrinsic component of the data distribution, so that a generative model that accurately captures the distribution also learns and reproduces the watermark in regenerated data.

Second, we theoretically analyze the radioactivity and robustness of RaMark under retraining attacks and data modification attacks.
Our analysis shows that an adversary who aims to preserve data utility needs to preserve the data distribution, which in turn preserves the embedded sinusoidal dependency and keeps the watermark detectable.
Conversely, weakening the watermark requires changing the data distribution, which is often achieved at the cost of reduced data utility.
This tradeoff makes it challenging for an adversary to remove the watermark without sacrificing data utility, leading to strong radioactivity and robustness.

Last, we conduct extensive experiments on two real-world tabular datasets to compare RaMark with seven state-of-the-art methods in terms of radioactivity and robustness under retraining and data modification attacks.
To \pend{evaluate} these methods in large-scale ownership verification scenarios, we simulate a watermark service provider serving $10^5$ independent data owners, where each owner embeds a unique watermark, and successful ownership verification must both detect the presence of a watermark and correctly identify its owner among all data owners.
The results show that RaMark achieves significantly stronger radioactivity than all baselines and consistently outperforms them under data modification attacks.


\smallskip
\noindent\textbf{Why this is a data security problem?}
Generated tabular data is increasingly used as a shareable and tradable asset in machine learning pipelines, especially in privacy-sensitive applications~\cite{koo2023survey}. Ensuring the security of such data requires protecting ownership and enabling reliable traceability under adversarial settings. Retraining attacks allow an adversary to regenerate high-utility data while removing watermark signals, which breaks existing ownership protection mechanisms. Therefore, watermarking for generated tabular data is fundamentally a data security problem, where the goal is to ensure robustness of watermark signals against adversarial attempts under realistic reuse and regeneration workflows.

\smallskip
\noindent\textbf{Paper organization.}
\Cref{sec:related_work} reviews related watermarking methods and explains why they fail under retraining attacks.
\Cref{sec:methods} presents RaMark, including the watermark detection and embedding procedures.
\Cref{sec:robustness} analyzes the robustness and radioactivity of RaMark under retraining and data modification attacks.
\Cref{sec:exp} reports the experimental evaluation and parameter analysis.
Finally, \Cref{sec:conclusions} concludes the paper and discusses future directions.

\section{Related Work}
\label{sec:related_work}

Existing research on generated tabular data watermarking can be broadly divided into two categories:
(1) \textbf{generative methods}~\cite{zhu2025tabwak,chen2025tag}, which modify the data generation process to embed watermarks, and
(2) \textbf{database watermarking methods}~\cite{li2005fingerprinting,sebe2006watermarking,shehab2007watermarking,imamoglu2017new,hu2018new,liu2022random,wang2023copyright,kamran2018comprehensive}, which embed watermarks into data after generation.
A common limitation of most existing methods is that their watermark signals are not embedded as stable, learnable dependencies of the data distribution.
Since retraining attacks operate by learning and reproducing the distribution rather than memorizing individual records, watermark signals that remain external to the distribution are typically not reproduced in regenerated data.


In the following, we review representative generative and database watermarking methods and analyze why they do not achieve \emph{radioactivity}, i.e., why their watermark signals are not preserved under retraining attacks. Additional discussion on prior studies of radioactivity is deferred to Appendix~\ref{app:prior_radioactivity}.


\emph{Generative watermarking methods}
embed watermarks during the data generation process.
TabWak~\cite{zhu2025tabwak} perturbs the latent variables of a diffusion model to generate watermarked samples.
However, due to the highly nonlinear mapping between latent and data spaces, identical latent perturbations can manifest as inconsistent, sample-specific distortions in data space.
These distortions do not induce coherent statistical dependencies in the overall data distribution.
As a result, when a generative model is retrained on the watermarked dataset, the watermark signal is not reliably reproduced, and TabWak does not achieve radioactivity.

MUSE~\cite{fang2025muse} embeds watermarks by using a scoring function to select specific generated samples.
This mechanism introduces per-record selection bias, but it does not establish global structural dependencies among attributes.
Consequently, the watermark signal is tied to a generation-time selection rule rather than to the underlying data distribution.
A retrained generative model that learns the distribution of the dataset does not inherit this selection rule, and the watermark is therefore not preserved.




Overall, existing generative methods rely on model-specific perturbations or procedural selection rules that are not encoded as intrinsic distributional structure.
Such signals are fragile under retraining attacks.


\emph{Database watermarking methods} embed watermarks after data generation.
A representative class~\cite{li2005fingerprinting,sebe2006watermarking,shehab2007watermarking,imamoglu2017new,hu2018new,liu2022random,wang2023copyright,kamran2018comprehensive} modifies selected data samples identified through a primary key~\cite{agrawal2002watermarking,agrawal2003watermarking,sion2003rights,li2005fingerprinting,gupta2009reversible,gupta2009database,wang2023copyright} or a virtual primary key~\cite{shehab2007watermarking,gort2019hqr,gort2020double,zheng2024tabularmark}.
Watermark detection depends on re-identifying the same records through these keys.
Under retraining attacks, however, new samples are generated from the learned distribution rather than reproducing original records.
The correspondence required for key-based re-identification is therefore lost, making such methods inherently non-radioactive.

Another line of work avoids explicit keys and embeds watermarks by modifying statistical properties of the dataset~\cite{che2025primary,he2024watermarking,ngo2024adaptive,zheng2025b2mark}.
PKF~\cite{che2025primary} introduces artificial correlations among attributes.
However, these correlations are externally imposed perturbations rather than stable dependencies of the original distribution, and they are not reliably captured by retrained generative models.
WGTD~\cite{he2024watermarking}, NgoMark~\cite{ngo2024adaptive}, and B2Mark~\cite{zheng2025b2mark} adjust the frequencies of selected attribute values using the ``green red list'' mechanism~\cite{guo2024context,kirchenbauer2023watermark,zhao2023provable}.
These methods rely on sample-level frequency bias and do not establish coherent inter-attribute structure.
When a generative model is retrained, such localized perturbations are typically smoothed out, and the watermark signal is not preserved.

Overall, whether based on key-dependent record modification or key-free statistical perturbation, existing database watermarking methods embed signals that remain external to the data-generating structure.
They are therefore vulnerable to retraining attacks.
\section{The RaMark Method}
\label{sec:methods}
In this section, we present RaMark, a radioactive watermarking method for continuous-valued tabular data. 
RaMark is built on a \textbf{central principle}: instead of perturbing individual data samples, it embeds a sinusoidal dependency among attributes as an intrinsic component of the data distribution. 
This sinusoidal 
dependency 
\pend{serves as} the \textbf{watermark signal} carried by the dataset.
Consequently, any generative model retrained on the watermarked dataset necessarily reproduces this sinusoidal dependency when approximating the data distribution. 

\pend{This design leads to the \textbf{key intuition} behind RaMark: preserving the watermarked data distribution also preserves the dependency that carries the watermark. 
\Cref{sec:theory} formalizes this intuition by analyzing how distributional closeness affects the spectral evidence used for detection.}
\pend{In particular, our analysis shows that a more accurate approximation of the data distribution leads to stronger persistence of the watermark signal in regenerated data, which underlies the radioactivity of RaMark.}

RaMark has \textbf{two modules}: watermark detection and watermark embedding. 
\pend{The \emph{detection module} first maps a dataset into a secret two-dimensional projected space and converts the projected points into a discrete-time signal. It then checks whether the signal contains a strong sinusoidal component at the designated frequency. }
\pend{The \emph{embedding module} embeds this sinusoidal structure through watermark-guided diffusion sampling.}

\pend{We present the detection module first, because it defines the structure of the watermark signal, and then describe how the embedding module embeds this structure during data generation.}

\subsection{Watermark Detection}
\label{sec:wm_detect}

\begin{figure}[t]
    \centering
    \includegraphics[width=1\linewidth]{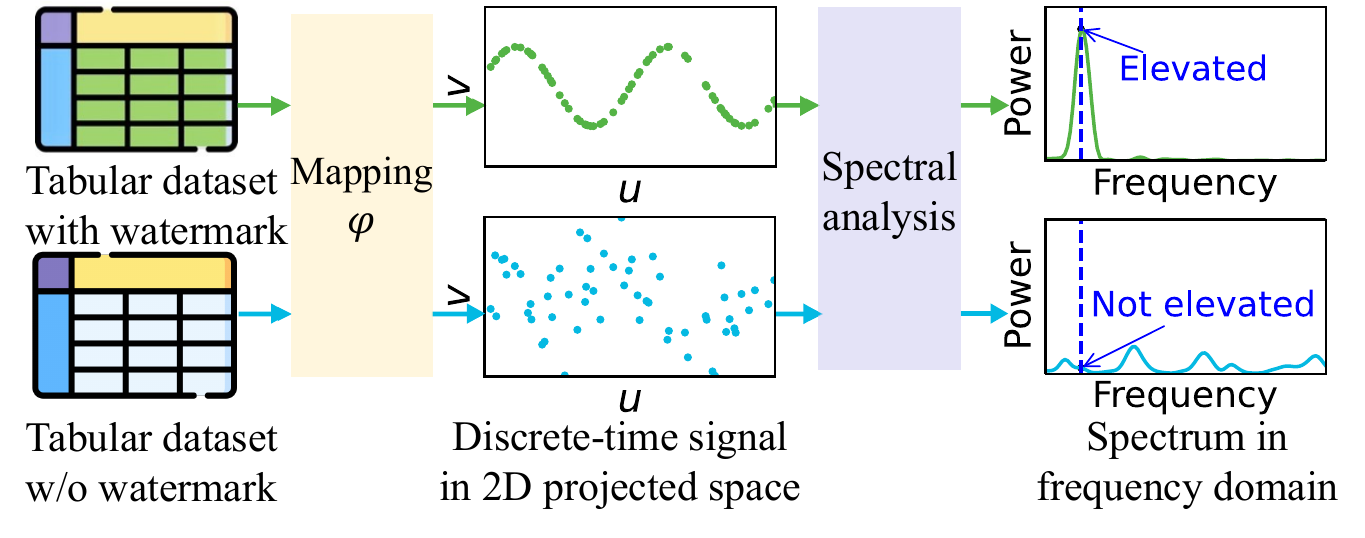}
    \caption{Watermarked and unwatermarked signals in the two-dimensional projected space and their spectra.}
    \Description{}
    \label{fig:signal_appearance}
\end{figure}

The watermark signal embedded by RaMark manifests as sinusoidal dependency of a designated frequency in a secret two-dimensional projected space. 
Watermark detection therefore 
\pend{reduces} to verify whether the projected data exhibits high spectral power at the designated frequency.

As shown in \Cref{fig:signal_appearance}, to detect the watermark, we transform the tabular dataset into a discrete-time signal in the projected space and analyze its spectral power. 
If the spectral power at the designated frequency shows a significantly elevated peak, this indicates the presence of the watermark. 
The detection procedure consists of \textbf{two steps}: (i) mapping the tabular dataset to a discrete-time signal in the two-dimensional projected space, and (ii) measuring the spectral power of the signal at the designated frequency.


%
%

\subsubsection{Mapping a tabular dataset to a discrete-time signal}
\label{sec:mapping}

To detect a sinusoidal dependency embedded in the data distribution, we first map the dataset into a two-dimensional projected space. 
Within this projected space, one coordinate induces an ordered indexing structure, while the other coordinate captures the corresponding projected values. 
By organizing samples according to this indexing structure, the sinusoidal dependency that constitutes the watermark signal manifests as a discrete-time signal, which is later processed by spectral analysis for watermark detection.



%
%
%
%
Let $Q \in \mathbb{R}^{n \times d}$ denote a \textbf{table} (i.e., tabular dataset) with $n$ rows and $d$ columns, where $Q_{i,:}$ is the $i$-th data sample. Each sample is mapped to a \textbf{point} $(u_i, v_i)$ in the two-dimensional \textbf{projected space} by projection onto two orthogonal unit vectors $\mathbf{e}_u, \mathbf{e}_v \in \mathbb{R}^d$. Specifically,
\begin{equation}
\label{eq:mapping_y}
v_i = \phi_v(Q_{i,:}) = Q_{i,:}\mathbf{e}_v^\top,
\end{equation}
and
\begin{equation}
\label{eq:mapping_x}
u_i = \phi_u(Q_{i,:}) = \frac{\left\lfloor \frac{Q_{i,:}\mathbf{e}_u^\top}{\beta} \right\rfloor}{s},
\end{equation}
\pend{where $\beta \in \mathbb{R}^+$ is the bin width that determines the granularity of the binning operation along the $u$-axis, $s \in \mathbb{R}^+$ is a scaling factor that rescales the spacing of the resulting discrete-time signal along the $u$-axis,}
and $\lfloor\cdot\rfloor$ denotes the flooring operator to perform binning. 
Mapping all the data samples in $Q$ yields a set of points
$Z = \{(u_i, v_i) \mid i=1,\ldots,n\}$.
Points with the same $u$-coordinate are grouped into bins
\begin{equation}
B_c = \{(u_i, v_i) \mid u_i = c\}.
\end{equation}
For each bin, we compute the \textbf{mean point} $(\bar{u}, \bar{v})$, where $\bar{u}=c$ and $\bar{v}$ is the mean of the corresponding $v_i$ values. 
The binning operation in \Cref{eq:mapping_x} quantizes the projection along $\mathbf{e}_u$ into an ordering index, the corresponding $v$-values serve as signal values, and the sequence of mean points forms a \textbf{discrete-time signal} $S=\varphi(Q)$.
\Cref{fig:points_and_mean_points} illustrates this aggregation: \pend{projected points are grouped into bins and converted into mean points, which form an ordered sequence used as the signal for detection.}

\begin{figure}[t]
    \centering
    \includegraphics[width=0.93\linewidth]{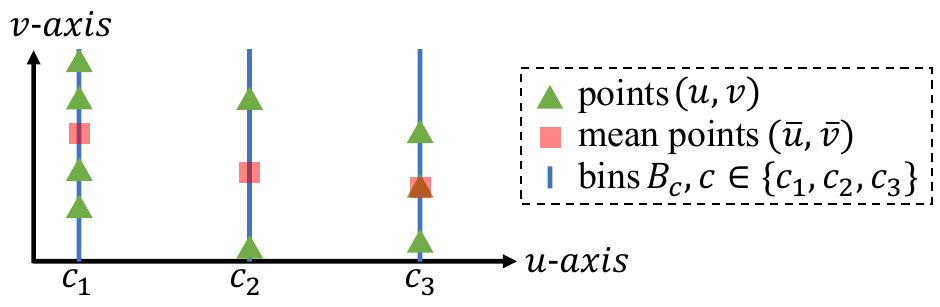}
    \caption{An example of points and mean points. The vertical lines represent the bins $B_c$ for the values of $c\in\{c_1, c_2, c_3\}$.}
    \Description{}
    \label{fig:points_and_mean_points}
\end{figure}

\begin{algorithm}[t]
  \caption{Watermark detection}
  \begin{algorithmic}[1]
    \Require A suspicious dataset $Q$ and a secret key $\mathcal{K}= \{\mathbf{e}_u, \mathbf{e}_v, \beta, s, \omega\}$.
    \Ensure Detection score $\mathtt{DS}(\omega)$.
    \State Obtain the discrete-time signal $S = \varphi(Q)$.
    \State Compute the spectral power $L(\omega)$ of $S$.
    \State Compute $\mathtt{FAP}(\omega)$ based on $L(\omega)$.
    \State Return the detection score $\mathtt{DS}(\omega)=1 - \mathtt{FAP}(\omega)$.
  \end{algorithmic}
  \label{alg:detect}
\end{algorithm}

\subsubsection{Spectral analysis and detection}
\label{sec:sawd}

Given the discrete-time signal $S$, watermark detection evaluates whether the signal exhibits high spectral power at the designated frequency $\omega$ corresponding to the sinusoidal dependency. 
If such high spectral power at the designated frequency is observed, it indicates that the projected dataset contains the sinusoidal dependency that forms the watermark signal, which implies the presence of watermark.

%
%
%
%

\pend{Because some bins may be empty, the resulting signal is not necessarily uniformly sampled. We therefore use the Lomb--Scargle Periodogram (LSP)~\cite{lomb1976least}, which estimates spectral power for non-uniformly sampled signals.}



Let $L(\omega)$ denote the spectral power of $S$ at frequency $\omega$. We define the \textbf{detection score} as
\begin{equation}
\mathtt{DS}(\omega) = 1 - \mathtt{FAP}(\omega),
\end{equation}
where $\mathtt{FAP}(\omega) \in [0,1]$ is the false alarm probability computed by LSP, and it represents the probability that random noise would produce a spectral power of at least $L(\omega)$. 
A large $\mathtt{DS}(\omega)$, thus a small $\mathtt{FAP}(\omega)$,
indicates that the spectral power at $\omega$ is unlikely to arise from noise. This indicates a strong watermark signal, which is strong evidence for the presence of the watermark. 
The final decision on the presence of a watermark is obtained by comparing $\mathtt{DS}(\omega)$ with a predefined threshold.

Algorithm~\ref{alg:detect} summarizes the detection procedure for determining whether a suspicious dataset $Q$ contains the watermark signal specified by the owner's secret key.
The \textbf{secret key}, denoted by $\mathcal{K}=\{\mathbf{e}_u,\mathbf{e}_v,\beta,s,\omega\}$, specifies the projection directions, binning parameters, and designated frequency. The key is known only to the data owner and prevents adversaries from identifying or weakening the watermark signal. The detection algorithm runs in $O(n(d+1))$ time and uses $O(n(d+1)+d)$ space.

The detection module of RaMark is inspired by PKF \cite{che2025primary} because both methods detect watermark by examining a sinusoidal signal in the 
\pend{projected} space. However, RaMark differs fundamentally from PKF because PKF's watermark is not radioactive, while RaMark's watermark is highly radioactive. Specifically, the watermark of PKF imposes artificial correlations among attributes, and these externally added correlations do not form stable or learnable dependencies in the underlying data distribution, so retrained generative models do not reproduce the watermark.

\subsection{Watermark Embedding}
\label{sec:embedding}

To embed a sinusoidal dependency into generated data, RaMark performs watermark-guided diffusion sampling, which is driven by an analytic watermark likelihood that is compatible with the detection procedure. 
Rather than modifying generated samples in a post-hoc manner, the watermark is embedded during the reverse denoising process (i.e., diffusion sampling) of a diffusion model, so that the sampled dataset intrinsically conforms to the target sinusoidal dependency in the projected space. 


We first provide an overview of the key idea, then define the watermark likelihood, and finally present watermark-guided diffusion sampling and the complete embedding algorithm.

%
%
%



%

\begin{figure}[t]
    \centering
    \includegraphics[width=\linewidth] {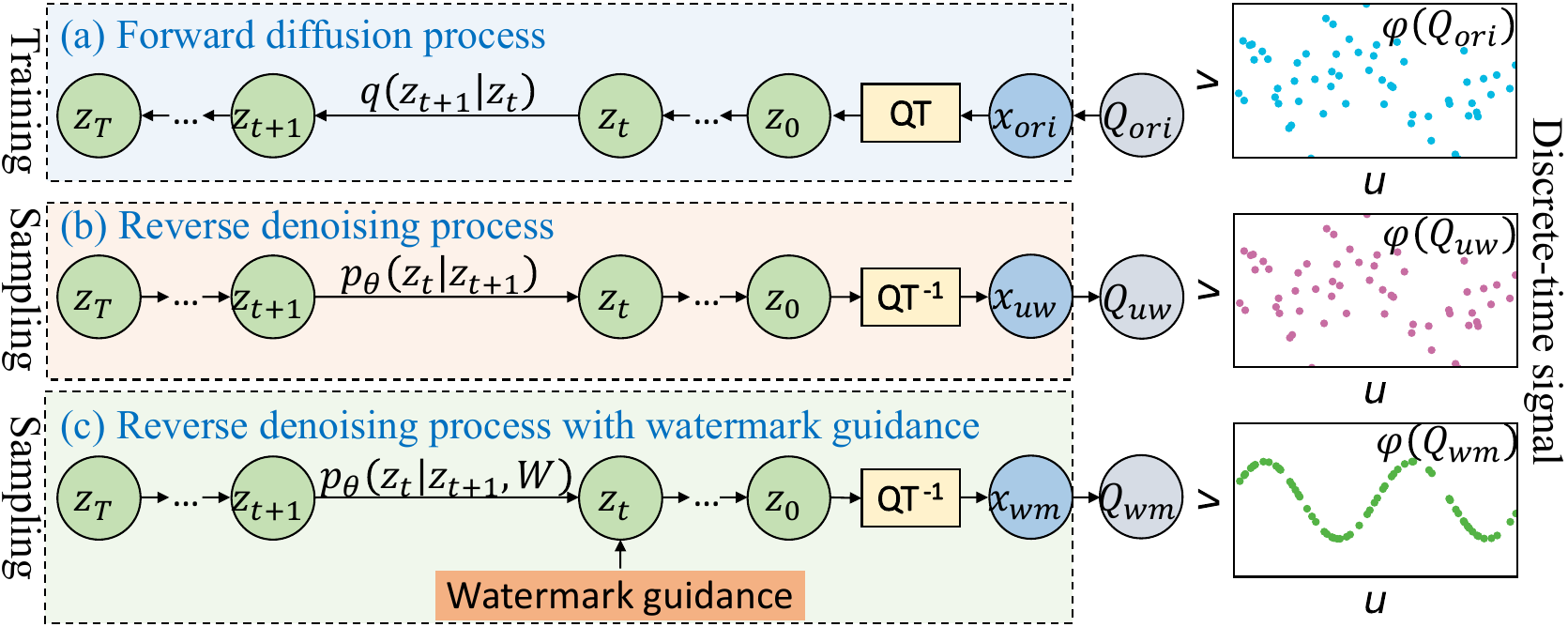}
    \caption{Overview of the watermark embedding method. The three scattered plots on the right side show the discrete-time signals of the corresponding tabular datasets $Q_{ori}$, $Q_{uw}$, and $Q_{wm}$, respectively.}
    \Description{}
    \label{fig:method_framework}
\end{figure}

\subsubsection{Overview of key idea.}
\label{sec:overview_key_idea}

\Cref{fig:method_framework} 
\pend{illustrates the embedding pipeline by contrasting standard diffusion sampling with RaMark's watermark-guided sampling.}
In \Cref{fig:method_framework}(a), the data owner trains a diffusion model on the original (unwatermarked) dataset $Q_{ori}$. In \Cref{fig:method_framework}(b), standard diffusion sampling from the trained diffusion model produces an unwatermarked generated dataset $Q_{uw}$, whose mapped discrete-time signal does not exhibit the sinusoidal dependency. In \Cref{fig:method_framework}(c), we add watermark guidance to modify the standard diffusion sampling so that the sampled dataset $Q_{wm}$ is biased to exhibit the sinusoidal dependency, thereby successfully embedding the watermark.


The quantile transformation ($\mathtt{QT}$) in \Cref{fig:method_framework}(a) is a standard normalization step in diffusion-based tabular generation~\cite{jia2024tabular,kotelnikov2023tabddpm,sattarov2023findiff}. It maps each data sample $x_{ori}\in Q_{ori}$ in the original $x$-space to a normalized latent state $z_0=\mathtt{QT}(x_{ori})$ in the $z$-space, which is the starting state of the forward diffusion process. Because $\mathtt{QT}$ is applied element-wise on attributes, the $x$-space and $z$-space have the same dimensionality~\cite{bogner2012normal,peng2007normalizing}. The training of the diffusion model applies the forward diffusion process to latent states $z_0,\ldots,z_T$ to learn the reverse denoising transitions. During sampling in \Cref{fig:method_framework}(b)--(c), we initialize $z_T\sim \mathcal{N}(0,\mathbf{I})$, iteratively denoise in the $z$-space to obtain $z_0$, and decode $z_0$ back to the $x$-space by $x=\mathtt{QT}^{-1}(z_0)$.

\subsubsection{Watermark likelihood.}

The watermark likelihood quantifies the degree to which a latent state $z_t$ aligns with the sinusoidal dependency in the projected space. 
To guide diffusion sampling toward watermarked samples, we need a differentiable objective that quantifies how well a candidate sample matches the target sinusoidal dependency. The watermark likelihood serves exactly this purpose.

During diffusion sampling, let $z_t$ denote the latent state at timestep $t$, and let $x_t = \mathtt{QT}^{-1}(z_t)$ be the corresponding decoded data sample. 
We map $x_t$ into the same two-dimensional projected space used in watermark detection. This obtains $(u_t, v_t)$, where $u_t = \phi_u(x_t)$ and $v_t = \phi_v(x_t)$. 
\pend{A sample is more conformed to the watermark if its projected value $v_t$ lies closer to the target sinusoidal curve at position $u_t$.}
The \textbf{deviation} from the sinusoidal dependency, modelled as $v = \sin(2\pi \omega u)$, is defined as
\begin{equation}
\label{eq:dxt}
d(x_t) = \left| v_t - \sin(2\pi \omega u_t) \right|,
\end{equation}
and the \textbf{watermark likelihood} is defined as
\begin{equation}
\label{eq:pwx}
\Pr(W \mid z_t) \triangleq \exp\!\left(-d(x_t)\right),
\end{equation}
where $W$ denotes the event that the decoded data sample conforms to the sinusoidal dependency. 
A small deviation $d(x_t)$ corresponds to a strong conformity with the sinusoidal dependency and therefore yields a higher watermark likelihood.

\subsubsection{Watermark-guided sampling.}
\label{sec:wm_guidance_embed}
Watermark-guided sampling modifies the standard diffusion sampling by incorporating the watermark likelihood as a multiplicative weighting term. 
The \textbf{key idea} of watermark-guided sampling is that among all denoising directions proposed by the diffusion model, we prefer those that make the sample more consistent with the target sinusoidal dependency.
At timestep $t$, we interpret the guided reverse denoising step as sampling from the conditional transition
\begin{equation}
\label{eq:transition_wm}
p_\theta(z_t \mid z_{t+1}, W) \propto p_\theta(z_t \mid z_{t+1}) \Pr(W \mid z_t),
\end{equation}
where $p_\theta(z_t \mid z_{t+1})$ formulates the standard diffusion sampling learned from $Q_{ori}$ in \Cref{fig:method_framework}(a), and $\theta$ denotes the trained diffusion model parameters. This modifies the sampling distribution of $z_t$ so that the latent states, which decode to samples with reduced deviation from the sinusoidal dependency, receive a higher probability of being sampled.

In diffusion models, the diffusion sampling is parameterized as a Gaussian
\begin{equation}
\label{eq:rev_transit_embed}
p_\theta(z_t\mid z_{t+1})=\mathcal{N}\big(z_t;\mu,\Sigma\big),
\end{equation}
where $\mu=\mu_\theta(z_{t+1},t)$ and $\Sigma=\Sigma_\theta(z_{t+1},t)$ are predicted by the model~\cite{yang2023diffusion,ho2020denoising}. 
To obtain a tractable sampling rule for \Cref{eq:transition_wm}, we locally linearize $\log \Pr(W\mid z_t)$ around $z_t=\mu$. 
The following result shows that incorporating the watermark likelihood modifies the reverse transition through a simple mean shift proportional to the gradient of the log watermark likelihood.

\begin{restatable}{theorem}{NameA}
\label{thm:gaussian_appr_embed}
Given $p_\theta(z_t\mid z_{t+1})=\mathcal{N}\big(z_t;\mu,\Sigma\big)$, the reweighted density
$p_\theta(z_t\mid z_{t+1})\Pr(W\mid z_t)$
is approximated by
\[
    p_\theta(z_t\mid z_{t+1}) \Pr(W\mid z_t)
    \approx
    \mathcal{N}\big(z_t;\mu+\Sigma \cdot g_z,\Sigma\big),
\]
where
$g_z = \nabla_{z_t}\log \Pr(W\mid z_t)\big|_{z_t=\mu}$ is the gradient of $\log \Pr(W\mid z_t)$ at $z_t=\mu$, and $\Sigma \cdot g_z$ denotes matrix-vector multiplication.
\end{restatable}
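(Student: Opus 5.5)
We follow the standard classifier-guidance argument for diffusion models. Write $\ell(z_t) \triangleq \log \Pr(W\mid z_t) = -d(\mathtt{QT}^{-1}(z_t))$ and take its first-order Taylor expansion around $z_t=\mu$:
\[
\ell(z_t) \approx \ell(\mu) + (z_t-\mu)^\top g_z, \qquad g_z = \nabla_{z_t}\ell(z_t)\big|_{z_t=\mu}.
\]
This is the local linearization announced just before the statement. Its justification is that $\Sigma$ is small at each reverse step, so $z_t$ concentrates near $\mu$ and higher-order terms are negligible. Exponentiating gives
\[
\Pr(W\mid z_t) \approx C\exp\big((z_t-\mu)^\top g_z\big), \qquad C=\exp(\ell(\mu)),
\]
where $C$ does not depend on $z_t$.

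Next, multiply by the Gaussian density and work in log space:
\[
\log\big[p_\theta(z_t\mid z_{t+1})\Pr(W\mid z_t)\big] \approx -\tfrac12 (z_t-\mu)^\top\Sigma^{-1}(z_t-\mu) + (z_t-\mu)^\top g_z + \mathrm{const}.
\]
Complete the square by setting $y=z_t-\mu$:
\[
-\tfrac12 y^\top\Sigma^{-1}y + y^\top g_z = -\tfrac12 (y-\Sigma g_z)^\top\Sigma^{-1}(y-\Sigma g_z) + \tfrac12 g_z^\top\Sigma g_z.
\]
This identity uses the symmetry of $\Sigma$ and is checked by expanding the right-hand side. The last term is constant in $z_t$.

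Hence the reweighted density is proportional to $\exp\!\big(-\tfrac12 (z_t-\mu-\Sigma g_z)^\top\Sigma^{-1}(z_t-\mu-\Sigma g_z)\big)$. Normalizing, which is implicit in the ``$\propto$'' of \Cref{eq:transition_wm}, gives exactly $\mathcal{N}(z_t;\mu+\Sigma g_z,\Sigma)$. The constants $C$ and $e^{\frac12 g_z^\top\Sigma g_z}$ are absorbed, so the approximation should be read up to a multiplicative normalizing constant.

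The main obstacle is not the algebra but the legitimacy of the linearization. The deviation $d(x_t)=|v_t-\sin(2\pi\omega u_t)|$ involves an absolute value, and $u_t$ contains the floor operator from \Cref{eq:mapping_x}, so $\ell$ is not differentiable everywhere. I would handle this by assuming $\mu$ lies where the absolute value is away from zero and $u$ lies away from a bin boundary. There the floor is locally constant, $\ell$ is smooth because $\mathtt{QT}^{-1}$ is smooth and monotone per attribute, and $g_z$ is well defined by the chain rule. Where needed, one can use a subgradient or a smooth surrogate for the floor, as is done in practice. I would also state explicitly that the quality of the approximation is $O(\|\Sigma\|)$-accurate in the regime of small reverse-step variance, which is the standard assumption in guided diffusion sampling.
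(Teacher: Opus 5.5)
Your proposal is correct and follows the paper's proof essentially step for step: you linearize $\log\Pr(W\mid z_t)$ at $z_t=\mu$, add it to the Gaussian log-density, and complete the square to read off the mean shift $\Sigma g_z$, with the leftover constants absorbed into the normalization. Your remarks on non-differentiability (the absolute value in $d$ and the floor in $\phi_u$) go beyond what the paper addresses, and they are a sensible way to make $g_z$ well defined.
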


\noindent
The proof of \Cref{thm:gaussian_appr_embed} is provided in the Appendix~\ref{app:the_proof}.

By \Cref{thm:gaussian_appr_embed}, the guided reverse denoising step replaces the mean $\mu$ with $\mu+\Sigma\cdot g_z$, where $g_z$ points in the direction that increases $\log \Pr(W\mid z_t)$. Since $\log \Pr(W\mid z_t)=-d(x_t)$, this guidance tends to reduce $d(x_t)$ and therefore strengthens the embedded watermark signal. Repeating the guided mean shift over all timesteps progressively biases sampling toward latent states whose decoded data samples are more conformed to the sinusoidal dependency. This yields a dataset carrying the embedded watermark signal.

We summarize the complete embedding procedure in \Cref{alg:embed}. Repeating \Cref{alg:embed} independently produces $Q_{wm}$. Following~\citet{dhariwal2021diffusion}, we include a guidance strength parameter $\alpha>0$ that controls the influence of the watermark term:
\begin{equation}
\label{eq:final_zt}
z_t \sim \mathcal{N}\big(\mu+\alpha(\Sigma \cdot g_z),\Sigma\big).
\end{equation}
\pend{
This design introduces only controlled and minor perturbations to the generated samples, since the watermark guidance is applied gradually at each denoising step and only slightly adjusts the generated values. The guidance strength $\alpha$ further controls how strongly diffusion sampling is biased toward the target sinusoidal dependency. Empirical studies on the effect of $\alpha$ are provided in \Cref{sec:exp_params_ans}.
}

\subsubsection{Scope of application}
RaMark relies on diffusion models for watermark embedding, as the watermark is embedded during the reverse denoising process through watermark-guided sampling. We adopt diffusion models because they provide strong distribution modeling capability and represent the state-of-the-art for tabular data generation~\cite{koo2023survey,kindji2024under,zhao2023recipe}.
RaMark operates on continuous-valued attributes, which are widely present in real-world tabular datasets such as financial, medical, industrial, and sensor data. 
In mixed-type tables containing both continuous-valued and categorical attributes, the sinusoidal dependency can be embedded on any subset of continuous columns without modifying discrete or categorical fields. 
As long as at least two continuous attributes are available, RaMark can construct the required two-dimensional projected space and embed the sinusoidal dependency while preserving the remaining schema. 
This design allows the watermark signal to be integrated into a broad range of practical tabular datasets without altering their discrete components.

\begin{algorithm}[t]
  \caption{Watermark embedding}
  \begin{algorithmic}[1]
    \Require A secret key $\mathcal{K}$ and a trained diffusion model $\theta$.
    \Ensure A watermarked data sample $x_{wm}$.
    \State Sample $z_T$ from $\mathcal{N}(0,\mathbf{I})$
    \For{each $t \in \{T-1, T-2, \ldots, 1, 0\}$}
	\State Compute $\mu=\mu_\theta(z_{t+1},t)$, $\Sigma=\Sigma_\theta(z_{t+1},t)$ and $g_z$
        \State Sample $z_t$ from $\mathcal{N}\big(\mu + \alpha(\Sigma \cdot g_{z}),\Sigma\big)$
    \EndFor
    \State \Return $x_{wm}\leftarrow \mathtt{QT}^{-1}(z_0)$
  \end{algorithmic}
  \label{alg:embed}
\end{algorithm}

\section{Robustness of the Radioactive Watermark}
\label{sec:robustness}

Robustness to watermark removal attacks is a central criterion for evaluating watermarking methods. In this section, we formalize the adversary’s goal and threat model, describe representative watermark removal attacks, and then provide a theoretical analysis of the robustness and radioactivity of RaMark.

\subsection{Adversary's Goal and Threat Model}
\label{sec:threat_model}

In tabular data watermarking~\cite{kamran2018comprehensive,ren2023robust,he2024watermarking,zheng2024tabularmark,li2022secure,zhu2025tabwak}, an adversary aims to weaken or remove the watermark while preserving the utility of the released dataset. We formalize the adversary’s goal and capabilities as follows.

\subsubsection{Adversary’s goal.}
Given a watermarked dataset $Q_{wm}$, the adversary seeks to construct an attacked dataset $Q_{atk}$ that minimizes the watermark detection score without substantially degrading its data utility. 


We measure data utility by how well $Q_{atk}$ preserves the statistical properties required for downstream learning tasks. Following established works~\cite{kotelnikov2023tabddpm,shi2024tabdiff,zhu2025tabwak}, we quantify utility using \textbf{Machine Learning Efficiency} ($\mathtt{MLE}$)~\cite{al2015efficient,kotelnikov2023tabddpm,shi2024tabdiff,zhu2025tabwak}, which evaluates how well a model trained on $Q_{atk}$ generalizes to real data. The computation of $\mathtt{MLE}$ is detailed in \Cref{sec:exp_set}. Under this objective, an attack is considered successful only if it weakens or removes the watermark signal while preserving $\mathtt{MLE}$.

\subsubsection{Threat model (adversary's capabilities)}
We adopt a standard private-key watermarking setting~\cite{kamran2018comprehensive,ren2023robust,he2024watermarking,zheng2024tabularmark,li2022secure,zhu2025tabwak} with the following assumptions.
\begin{itemize}
    \item \emph{Access to released data.} The adversary has full access to the released watermarked dataset $Q_{wm}$, but not to the original dataset $Q_{ori}$, which remains private to the data owner.

    \item \emph{No access to the secret key.} The adversary does not possess the secret key $\mathcal{K}$, which specifies the projection directions and watermark parameters. The key is retained solely by the data owner~\cite{che2025primary,zheng2024tabularmark,kamran2018comprehensive}.

    \item \emph{Semantic consistency of attributes.} The adversary cannot alter the semantic meaning of attributes~\cite{che2025primary,zheng2025b2mark}. Changing attribute semantics would produce a dataset that is incompatible with the original task and would severely reduce $\mathtt{MLE}$, which violates the adversary’s goal of preserving utility.
\end{itemize}

Under this threat model, the adversary is allowed to modify data values or retrain generative models, but must preserve distributional characteristics sufficient to keep the data's utility for downstream learning tasks. This captures realistic misuse scenarios in data sharing settings, where redistributed datasets are expected to remain functionally equivalent to the original release.

\subsection{Watermark Removal Attacks}
\label{sec:wmr_attacks}

Watermark removal attacks attempt to transform a watermarked dataset $Q_{wm}$ into an attacked dataset $Q_{atk}$ in which the watermark signal is weakened or removed while maintaining data utility. We consider two broad categories of attacks, distinguished by whether the adversary regenerates the data distribution or directly edits the released dataset.

\subsubsection{Retraining attack.}
In a retraining attack, the adversary trains a new generative model on the watermarked dataset $Q_{wm}$ and then samples a regenerated dataset $Q_{atk}$ from the newly trained model. The underlying intuition is that a well-trained generative model captures the overall data distribution of $Q_{wm}$ but may fail to reproduce watermark signals that are not encoded as stable distributional dependencies. If successful, this attack produces $Q_{atk}$ that retains high data utility while weakening the watermark signal so that it becomes harder to detect the watermark.


The adversary may employ any generative model suitable for tabular data. In our experiments, we consider three major classes of deep generative models:  
(1) \emph{Diffusion models}: TabDDPM~\cite{kotelnikov2023tabddpm} and TabDiff~\cite{shi2024tabdiff};  
(2) \emph{Generative adversarial networks (GANs)}: CTAB-GAN+~\cite{zhao2021ctab,zhao2024ctab}; and  
(3) \emph{Variational autoencoders (VAEs)}: TVAE~\cite{xu2019modeling}.  
These models represent state-of-the-art generative paradigms for tabular data and provide strong retraining baselines.

\subsubsection{Data modification attacks.}
In contrast to retraining, data modification attacks directly edit the released dataset $Q_{wm}$ through value-level or structure-level operations. Such transformations are standard in the dataset watermarking literature and aim to weaken the watermark by perturbing the data while preserving its overall statistical characteristics.

We consider the following common operations:

\begin{enumerate}
\item \emph{Noise addition}~\cite{zheng2025b2mark,che2025primary,he2024watermarking}:  
Each attribute value in $Q_{wm}$ is perturbed by adding noise sampled from a uniform distribution $U[-\rho_{na}, \rho_{na}]$, where $\rho_{na} > 0$ controls the perturbation magnitude.

\item \emph{Row deletion}~\cite{kamran2018comprehensive,ren2023robust}:  
A fraction $\rho_{rd} \in [0,1]$ of rows is randomly removed from $Q_{wm}$.

\item \emph{Row insertion}~\cite{shehab2007watermarking,hu2018new,kamran2018comprehensive}:  
A fraction $\rho_{ri} \in [0,1]$ of new rows is inserted into $Q_{wm}$. For each inserted row, the $j$-th attribute value is sampled from $U[\mu_j - \sigma_j, \mu_j + \sigma_j]$, where $\mu_j$ and $\sigma_j$ are the mean and standard deviation of the $j$-th attribute in $Q_{wm}$.

\item \emph{Column deletion}~\cite{che2025primary,zheng2025b2mark}:  
A fraction $\rho_{cd} \in [0,0.5]$ of attributes (i.e., columns) is removed from $Q_{wm}$.
\end{enumerate}

For all attacks, larger values of $\rho_{na}$, $\rho_{rd}$, $\rho_{ri}$, and $\rho_{cd}$ correspond to stronger attack strength. These operations model realistic adversarial behaviors that attempt to weaken or remove watermark signal while preserving data utility for downstream tasks.

\subsection{Robustness Analysis}
\label{sec:theory}

This subsection links \emph{distributional similarity} between two tabular datasets to the \emph{spectral power} of their mapped discrete-time signals at the designated frequency $\omega$.
We first present two theorems: the first converts closeness between the signal distributions into closeness of spectral power, and the second shows that closeness between dataset distributions is (with high probability) preserved by the dataset-to-signal mapping $\varphi$.
We then give three remarks that explain why these results imply strong radioactivity and robustness of RaMark against watermark removal attacks.
The proofs of both theorems are provided in the Appendix~\ref{app:the_proof}.

\subsubsection{Setup}
Let $Q_A$ and $Q_B$ be two tabular datasets, each containing $n$ samples, with corresponding discrete-time signals
$S_A=\{(\bar{u}_j^A,\bar{v}_j^A)\}_{j=1}^{\ell}$ and
$S_B=\{(\bar{u}_j^B,\bar{v}_j^B)\}_{j=1}^{\ell}$ obtained by applying the same mapping $\varphi$ (hence the same secret key).
Let
$V=\max_{1\le j\le \ell}\{|\bar v_j^A|,|\bar v_j^B|\}$ be an amplitude bound, and let
$\sigma_A^2$ and $\sigma_B^2$, respectively, be the sample variances of $\{\bar v_j^A\}_{j=1}^{\ell}$ and $\{\bar v_j^B\}_{j=1}^{\ell}$, with $\sigma^2=\min\{\sigma_A^2,\sigma_B^2\}$.
Denote by $\overline P_A$ and $\overline P_B$ the distributions of $S_A$ and $S_B$, and by $L_A(\omega)$ and $L_B(\omega)$ their spectral powers at frequency $\omega$, resepctively.
Denote by $\mathcal{W}_1(\cdot,\cdot)$ the Wasserstein-1 distance between two distributions.

\begin{restatable}[Probabilistic Bound on Spectral Power Difference]{theorem}{maintheoremA}
\label{thm:ls-power-bound-bar}
For any $\epsilon>0$,
\[
\Pr\bigl(|L_A(\omega)-L_B(\omega)|\ge \epsilon\bigr)
\le
4\exp\Big(-\frac{\ell\epsilon^2}{C_1}\Big)
+ \frac{C_2\mathcal{W}_1(\overline{P}_A,\overline{P}_B)}{\epsilon},
\]
where $C_1=\frac{288V^4(1+V^2/\sigma^2)^2}{\sigma^4}$ and $C_2=\frac{12V(V^2+\sigma^2)}{\sigma^4}$.
\end{restatable}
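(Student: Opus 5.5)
We would prove the bound by splitting $L_A(\omega)-L_B(\omega)$ into a fluctuation part and a bias part, and controlling them with two separate tools. Write the Lomb--Scargle power in its normalized form
\[
L(\omega)=\frac{1}{2\sigma_S^2}\left[\frac{\bigl(\sum_j (\bar v_j-\bar m)\cos\omega(\bar u_j-\tau)\bigr)^2}{\sum_j\cos^2\omega(\bar u_j-\tau)}+\frac{\bigl(\sum_j (\bar v_j-\bar m)\sin\omega(\bar u_j-\tau)\bigr)^2}{\sum_j\sin^2\omega(\bar u_j-\tau)}\right].
\]
Here $\bar m$ is the sample mean of the $\bar v_j$, $\sigma_S^2$ is their sample variance, and $\tau$ is the usual phase offset. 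We then view $L$ as a function $F(S)$ of the $\ell$ points of the signal.

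The first step is to show that $F$ has bounded differences. Replacing one point $(\bar u_j,\bar v_j)$ changes each numerator sum by $O(V)$ relative to a total of order $\ell V$. Using $|\bar v_j|\le V$, this changes the squared sums by $O(\ell V^2)$, so after normalization by $\ell$ and by $\sigma_S^2\ge\sigma^2$ the change is $O(V^2/(\ell\sigma^2))$. The variance normalization also moves, by $O(V^2/\ell)$, and propagating that change through the quotient gives the factor $(1+V^2/\sigma^2)$. Altogether the bounded-difference constant is $c_j\lesssim V^2(1+V^2/\sigma^2)/(\ell\sigma^2)$.

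The second step applies McDiarmid's inequality to $L_A$ and to $L_B$ separately, each at level $\epsilon/3$, with two-sided tails. This gives
\[
\Pr\bigl(|L_X-\mathbb{E}L_X|\ge\epsilon/3\bigr)\le 2\exp\bigl(-2(\epsilon/3)^2/(\ell c^2)\bigr),\qquad X\in\{A,B\}.
\]
Summed over $A$ and $B$, this produces the term $4\exp(-\ell\epsilon^2/C_1)$. Tracking the numeric constants, $2\cdot 9\cdot 16=288$ after squaring a constant of order $4$, should reproduce $C_1=288V^4(1+V^2/\sigma^2)^2/\sigma^4$.

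The third step handles the bias $|\mathbb{E}L_A-\mathbb{E}L_B|$, using Kantorovich--Rubinstein duality. If $F$ is Lipschitz with constant $K$ (with respect to the metric underlying $\mathcal{W}_1$), then $|\mathbb{E}_{\overline P_A}F-\mathbb{E}_{\overline P_B}F|\le K\,\mathcal{W}_1(\overline P_A,\overline P_B)$. Differentiating $L$ in $\bar v$ gives $\partial L/\partial\bar v_j=O(V/\sigma^2)$ from the numerator. The chain rule through $1/\sigma_S^2$ adds $O(V^3/\sigma^4)$. Together these give $K\lesssim V(V^2+\sigma^2)/\sigma^4$, which matches the shape of $C_2/12$.

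The last step turns the bias into a probability by Markov's inequality. On the event $|L_A-L_B|\ge\epsilon$, either one of the two fluctuations exceeds $\epsilon/3$, or the remaining discrepancy exceeds $\epsilon/3$. We apply Markov's inequality to that remaining discrepancy, coupling $S_A$ and $S_B$ optimally and bounding $\mathbb{E}|F(S_A)-F(S_B)|\le K\,\mathcal{W}_1$. This yields the $C_2\mathcal{W}_1/\epsilon$ term. The factor $12$ comes from $3$ (the $\epsilon/3$ split) times the Lipschitz constant of about $4$.

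The main obstacle is justifying the Lipschitz and bounded-difference estimates when the $\bar u_j$ also vary, since $\tau$ and the $\cos^2/\sin^2$ denominators depend on them. The denominators must stay bounded away from zero, of order $\ell$. To handle this, we would first use the fact that the $\bar u_j$ lie on the same bin grid for both signals, fixed by the key. We would then treat the denominators as comparable constants of order $\ell/2$, or absorb that dependence into the $V$-based constants. If needed, we would restrict the Lipschitz analysis to the $\bar v$ coordinates, which is what the stated constants suggest.
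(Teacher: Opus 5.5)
The structure of your proof is the paper's: the same three-term triangle inequality with an $\epsilon/3$ union bound, McDiarmid for the two fluctuation terms, and Kantorovich--Rubinstein for the bias. That bias is the deterministic number $b=|\mathbb{E}L_A-\mathbb{E}L_B|$, so your Markov step is just $\mathbbm{1}\{b\ge\epsilon/3\}\le 3b/\epsilon$. You also handle the $\bar u$-dependence the same way: a common bin grid after clipping extra bins, derivatives only in $\bar v$, $D_c,D_s\ge\ell/2$, and $\sigma_S^2\ge\sigma^2$.

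\textbf{The gap is in your first step.} With the normalization you display, $L=\frac{1}{2\sigma_S^2}[\cdots]$, the bracket is of order $\ell V^2$. Replacing one $\bar v_k$ then moves $L$ by order $V^2/\sigma^2$, not $V^2/(\ell\sigma^2)$, and McDiarmid gives only $\exp(-c\,\epsilon^2/\ell)$. In fact, for pure noise that periodogram has $O(1)$ fluctuations whatever $\ell$ is, so the stated $\ell$ in the numerator of the exponent is unattainable for it. The paper uses $L=\frac{1}{2\ell\sigma^2}\bigl(N_c^2/D_c+N_s^2/D_s\bigr)$, and this extra $1/\ell$ is exactly what yields $|\partial L/\partial\bar v_k|\le\frac{4V}{\ell\sigma^2}(1+V^2/\sigma^2)$. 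Your first step tacitly assumes that factor: dividing an $O(\ell V^2)$ change by $D_c\sim\ell$ and by $\sigma^2$ gives only $O(V^2/\sigma^2)$. Your third step (per-coordinate derivative $O(V/\sigma^2)$) is computed without it. You need the $1/\ell$-normalized power throughout.

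With that fixed, the constants match the paper's, though not by your bookkeeping. The bounded-difference constant is $c_k\le 2V\cdot\frac{4V}{\ell\sigma^2}(1+V^2/\sigma^2)$, so $2(\epsilon/3)^2/\sum_k c_k^2=\ell\epsilon^2/C_1$ with $288=9\cdot 64/2$. In $\exp(-2t^2/\sum_k c_k^2)$ the factor $2$ divides, so a leading constant $4$ in $c_k$ would give $9\cdot 16/2=72$, not $2\cdot 9\cdot 16$. For the bias term, $K=\sum_k\sup|\partial L/\partial\bar v_k|\le 4V(V^2+\sigma^2)/\sigma^4$, giving $C_2=3K$.

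Your concern about the denominators is legitimate, and the paper does not resolve it either. Since $D_c+D_s=\ell$ identically, $D_c,D_s\ge\ell/2$ can hold only when $D_c=D_s=\ell/2$. The stated constants therefore rest on this balance assumption in both your argument and the paper's.
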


\Cref{thm:ls-power-bound-bar} states a \textbf{clean message}: \emph{if the signal distributions $\overline P_A$ and $\overline P_B$ are close in Wasserstein-1 distance, then their spectral powers at $\omega$ are unlikely to differ much.}
The exponential term decays with the signal length $\ell$, while the second term scales linearly with $\mathcal W_1(\overline P_A,\overline P_B)$.

\subsubsection{From datasets to signals}
Let $T_A$ and $T_B$ be the data-sample distributions of $Q_A$ and $Q_B$ (in the original $d$-dimensional attribute space), respectively.
Recall that $\varphi$ bins points by their $u$-coordinates.
For any bin coordinate $c$, let $\Pr(u_A=c)$ (resp., $\Pr(u_B=c)$) be the probability that a sample from $T_A$ (resp., $T_B$) is mapped by $\phi_u$ to $u=c$.
Define $p=\min_{\forall c}\{\Pr(u_A=c),\Pr(u_B=c)\}$ and let $r$ be the maximum number of samples that fall into any bin in either dataset.

\begin{restatable}[Preservation of Wasserstein-1 Distance through Dataset-to-Signal Mapping]{theorem}{maintheoremB}
\label{thm:W_distance_relation}
If $\mathcal{W}_1(T_A, T_B) \leq \zeta$,
then
\[
\begin{aligned}
\Pr \left( \mathcal{W}_1(\overline{P}_A,\overline{P}_B)
\le\frac{C}{p}\left[ \frac{1}{s} \left( \frac{\zeta}{\beta} + 1 \right) + \zeta \right]
+2 \ell \sqrt{\frac{2V^2}{(1-\delta)rp}\log\frac{4 \ell}{\eta}} \right) \\
\geq
1 - 2 \ell \exp\Big(-\frac{\delta^2}{2}rp\Big) - \eta,
\end{aligned}
\]
where $C=\max\{1,4V\}$, $\beta>0$ and $s>0$ are the bin width and scaling factor in $\varphi$, and $\delta\in(0,1)$ and $\eta\in(0,1)$ are concentration parameters introduced by Chernoff's and Hoeffding's inequalities.
\end{restatable}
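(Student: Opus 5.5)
The plan is to split $\mathcal{W}_1(\overline P_A,\overline P_B)$ into a \emph{population} part and a \emph{sampling} part by the triangle inequality. For each bin coordinate $c$, write $m_A(c)=\mathbb{E}_{x\sim T_A}[\phi_v(x)\mid \phi_u(x)=c]$ for the population mean of $v$ in bin $c$, and define $m_B(c)$ analogously. Then $\overline P_A$ is the empirical signal $\{(c,\bar v^A_c)\}$, which is compared with its population counterpart $\{(c,m_A(c))\}$; the same is done for $B$. This gives
\[
\mathcal{W}_1(\overline P_A,\overline P_B)\le \mathcal{W}_1(\overline P_A,\overline M_A)+\mathcal{W}_1(\overline M_A,\overline M_B)+\mathcal{W}_1(\overline M_B,\overline P_B).
\]
The middle term should produce the deterministic piece $\frac{C}{p}\bigl[\frac1s(\frac{\zeta}{\beta}+1)+\zeta\bigr]$. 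The two outer terms should produce the $2\ell\sqrt{\cdot}$ term and the failure probability.

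\textbf{Deterministic (population) term.} Take an optimal coupling $\gamma$ of $T_A,T_B$ with $\mathbb{E}_\gamma\|x_A-x_B\|\le\zeta$. Because $\mathbf{e}_u$ and $\mathbf{e}_v$ are unit vectors, the projections are $1$-Lipschitz. Hence $\mathbb{E}_\gamma|\phi_v(x_A)-\phi_v(x_B)|\le\zeta$. For the $u$-coordinate, $|\lfloor a/\beta\rfloor-\lfloor b/\beta\rfloor|\le |a-b|/\beta+1$, so $\mathbb{E}_\gamma|\phi_u(x_A)-\phi_u(x_B)|\le \frac1s(\zeta/\beta+1)$; this is the source of the $+1$.

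To pass from the coupled samples to bin means, I would condition on bins. The conditional laws $T_A(\cdot\mid u=c)$ have mass at least $p$, so conditioning inflates any transport cost by at most $1/p$. Mean points are $1$-Lipschitz functionals of conditional distributions, and $|v|\le V$. The possible mismatch between bin assignments under the coupling costs at most a $4V$ factor, bounding $|m_A(c)-m_B(c')|$ by $2V$ on the mismatched mass. Combining these yields $C=\max\{1,4V\}$ times the bracket, divided by $p$. I would write the signal-level $\mathcal W_1$ as the $\mathcal W_1$ between the uniform measures on the $\ell$ points $(c,m(c))$, and use the pairing induced by $\gamma$ between bins.

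\textbf{Sampling terms.} For each bin, the count $N_c\sim\mathrm{Bin}(n,\Pr(u=c))$. I read the paper's $rp$ as the expected-count scale, i.e.\ $rp$ stands in for $\mathbb{E}N_c\ge np$, identifying $r$ with it. A multiplicative Chernoff bound gives $N_c\ge(1-\delta)rp$ except with probability $\exp(-\delta^2 rp/2)$. A union bound over $\ell$ bins and two datasets gives $2\ell\exp(-\delta^2rp/2)$.

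Conditioned on the counts, $\bar v_c$ is an average of $N_c$ i.i.d.\ values in $[-V,V]$. Hoeffding then gives $|\bar v_c-m(c)|\le\sqrt{\frac{2V^2}{N_c}\log\frac{4\ell}{\eta}}$ except with probability $\eta/(2\ell)$ per bin per dataset. A union bound over $2\ell$ events gives total probability $\eta$. The $u$-coordinates coincide exactly, so transporting each empirical point to its population point costs at most the sum of these deviations. Bounding crudely by $\ell$ times the maximum, and adding the two datasets, gives the $2\ell\sqrt{\frac{2V^2}{(1-\delta)rp}\log\frac{4\ell}{\eta}}$ term.

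\textbf{Main obstacle.} The hardest step will be the population term, specifically rigorously converting a sample-level coupling into a bound on differences of bin means. Bin memberships differ under the coupling, and the conditioning on bins creates the $1/p$ factor. I would first try a direct argument: for each $c$, write $m_A(c)-m_B(c)$ via the coupling restricted to $\{u_A=c\}$. Then split into the event where both points fall in the same bin, controlled by $\zeta/p$, and the event where they do not, controlled by $2V\cdot\Pr(\text{mismatch})/p$. The mismatch probability is itself bounded through the $u$-Lipschitz estimate above.
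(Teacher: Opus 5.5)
Your sampling half is sound and is essentially the paper's argument. It uses the same split into two empirical-versus-population deviations plus a population difference. It also uses Chernoff on bin counts with $r$ as the number of trials, Hoeffding at level $\eta/(2\ell)$ per bin and dataset, and a crude factor $\ell$. The only organizational difference is that you go straight from the coupling of $T_A,T_B$ to bin means, instead of passing through $\mathcal{W}_1(P_A,P_B)$.

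The genuine gap is the step you flag as the main obstacle, and it cannot be closed as planned. You intend to bound $\Pr(\phi_u(x_A)\neq\phi_u(x_B))$ ``through the $u$-Lipschitz estimate''. But distinct bin coordinates are $1/s$ apart, so that estimate only yields $\Pr(\text{mismatch})\le s\,\mathbb{E}_\gamma|\phi_u(x_A)-\phi_u(x_B)|\le\zeta/\beta+1$, which is vacuous. Reproducing the term $\frac{4V}{p}\cdot\frac{1}{s}\bigl(\frac{\zeta}{\beta}+1\bigr)$ would need $\Pr(\text{mismatch})\le\frac{1}{s}\bigl(\frac{\zeta}{\beta}+1\bigr)$. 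That fails in general once $s>1$: $T_A=\delta_{(1-\epsilon)\beta\mathbf{e}_u}$ and $T_B=\delta_{(1+\epsilon)\beta\mathbf{e}_u}$ already violate it. The ``$+1$'' you correctly trace to the floor is exactly the mass sitting near bin boundaries. An arbitrarily cheap transport pushes that mass across a boundary, which moves bin means by order $V$, not order $1/s$.

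This is not a repairable detail. Take $\beta=1$ and, in every bin $k$, put two equal-mass atoms: one at the bin centre with $v=0$, and one at distance $\epsilon$ left of the bin's right edge with $v=(-1)^kV$ (omit this edge atom in the last bin). Let $T_B$ move the edge atoms by $2\epsilon$ along $\mathbf{e}_u$.
\begin{itemize}
\item Then $\zeta=O(\epsilon)$, and both laws occupy the same $\ell$ bins with $p=\Theta(1/\ell)$.
\item Yet $|m_A(c)-m_B(c)|=V$ in every interior bin, so $\sum_c|m_A(c)-m_B(c)|\ge(\ell-2)V$.
\item Your target $\frac{C}{p}\bigl[\frac{1}{s}\bigl(\frac{\zeta}{\beta}+1\bigr)+\zeta\bigr]$ tends to $\frac{C}{ps}=\Theta(C\ell/s)$.
\end{itemize}
For large $s$ this fails; the paper's experiments use $s=5000$. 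Since the sampling terms vanish as $n\to\infty$, the stated conclusion itself fails for the binwise $\ell^1$ signal distance that Theorem~\ref{thm:ls-power-bound-bar} consumes.

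The paper's proof has the same hole. It asserts $\sum_c p|\mu_A(c)-\mu_B(c)|\le\mathbb{E}|v^A-v^B|+4V\,\mathbb{E}|u^A-u^B|$, but this is only provable with $4V\Pr(u^A\neq u^B)$ as the last term, i.e.\ with an extra factor $s$. With that correction, the $+1$ leaves a bound no better than the trivial $2V\ell$.

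An extra hypothesis is therefore needed. For example, a density bound $f$ on $x^\top\mathbf{e}_u$ gives $\Pr(\text{mismatch})\le\zeta/\tau+2\tau f(\ell+1)$ for every $\tau>0$, and hence a $\sqrt{\zeta}$-type term in place of $\frac{1}{s}\bigl(\frac{\zeta}{\beta}+1\bigr)$.

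Separately, reading $\overline P$ as the uniform measure on the $\ell$ signal points does not match the $\ell^1$ distance between signal vectors used in Theorem~\ref{thm:ls-power-bound-bar}. Also, the ``pairing induced by $\gamma$ between bins'' is not a coupling of those uniform measures, since its marginals are the bin masses.
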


\Cref{thm:W_distance_relation} formalizes the \textbf{straight intuition}: \emph{small changes in the dataset distribution} (small $\mathcal W_1(T_A,T_B)$) \emph{typically lead to small changes in the signal distribution} (small $\mathcal W_1(\overline P_A,\overline P_B)$), up to finite-sample fluctuations controlled by $r$, $p$, and $\ell$.

\subsubsection{Putting the two theorems together}
Combining \Cref{thm:W_distance_relation} with \Cref{thm:ls-power-bound-bar} yields a \textbf{simple chain}:
\begin{align*}
\mathcal W_1(T_A,T_B)\ & \text{small}
\ \Longrightarrow\ 
\mathcal W_1(\overline P_A,\overline P_B)\ \text{small (with high probability)}\\
& \ \Longrightarrow\ 
|L_A(\omega)-L_B(\omega)|\ \text{small (with high probability)}.
\end{align*}
We now interpret this chain under attacks by setting $Q_A=Q_{wm}$ to be the watermarked dataset, and $Q_B=Q_{atk}$ to be the attacked dataset.

\begin{remark}[Utility–watermark trade-off]
\label{rem:dilemma}
An adversary aims to weanken watermark signal while preserving data utility. 
However, high data utility typically requires the attacked dataset $Q_B$ to remain distributionally close to the watermarked dataset $Q_A$, that is, $\mathcal W_1(T_A,T_B)$ must be small and the effective sample support across bins must remain sufficient. 
Under these conditions, \Cref{thm:W_distance_relation} implies that the signal distributions $\overline P_A$ and $\overline P_B$ remain close with high probability, and \Cref{thm:ls-power-bound-bar} further implies that the spectral powers $L_A(\omega)$ and $L_B(\omega)$ remain close with high probability. 
Since $Q_A$ has a large $L_A(\omega)$, $L_B(\omega)$ also remains large, which means the watermark stays detectable in $Q_B$.

Therefore, substantially weakening the watermark signal requires increasing $\mathcal W_1(T_A,T_B)$ or significantly reducing the sample support of bins, both of which distort the data distribution and typically degrade utility. 
This establishes an inherent \textbf{utility-watermark trade-off}: preserving utility tends to preserve the watermark, while suppressing the watermark requires sacrificing utility.
\end{remark}

\begin{remark}[Radioactivity against retraining attacks]
\label{rem:radioactivity}
A retraining attack trains a new generative model on $Q_A$ and outputs $Q_B$ by sampling from that model.
If retraining succeeds at preserving utility, then the generated distribution $T_B$ must approximate $T_A$, meaning $\mathcal W_1(T_A,T_B)$ is small.
By \Cref{thm:W_distance_relation}, this implies $\overline P_B$ remains close to $\overline P_A$ with high probability (w.h.p.); by \Cref{thm:ls-power-bound-bar}, this further implies $L_B(\omega)\approx L_A(\omega)$ with high probability.
Because the watermark in $Q_A$ creates an elevated spectral power peak at $\omega$ (large $L_A(\omega)$), the retrained model reproduces this elevated spectral power in $Q_B$ as part of matching the data distribution.
Hence, retraining preserves the watermark rather than removing it, which leads to exactly the radioactivity of RaMark.
\end{remark}

\begin{remark}[Robustness against data modification attacks]
\label{rem:robustness}
A data modification attack constructs $Q_B$ by directly perturbing $Q_A$ (e.g., adding noise, deleting/inserting rows, or deleting columns).
Such operations can be viewed as perturbing the underlying sample distribution from $T_A$ to a nearby $T_B$.
If the adversary keeps the perturbation mild to preserve utility, then $\mathcal W_1(T_A,T_B)$ remains small and the effective bin coverage (captured by $rp$) remains sufficiently large.
The same implication chain applies: $\overline P_B$ stays close to $\overline P_A$ (w.h.p.), and thus $L_B(\omega)$ stays close to $L_A(\omega)$ (w.h.p.), so the watermark remains detectable.
Therefore, to effectively suppress $L_B(\omega)$, the adversary must apply modifications strong enough to substantially change $T_B$ or severely reduce effective sample support across bins, which typically causes a noticeable utility drop.
\end{remark}


\section{Experiments}
\label{sec:exp}

In this section, we empirically evaluate the robustness and utility of RaMark under the threat models defined in \Cref{sec:threat_model}. Our experiments focus on the central question of this paper: whether the watermark remains detectable when an adversary attempts to remove it while preserving data utility. We compare RaMark with seven state-of-the-art watermarking methods on two real-world tabular datasets, under both retraining attacks and data modification attacks. All experiments were conducted on a desktop with an Intel(R) Core(TM) i9-10900K CPU and 64 GB RAM.

The remainder of this section is organized as follows. \Cref{sec:exp_set} describes the datasets, baselines, and basic experiment setup. \Cref{sec:perform_eval} introduces the evaluation protocols of watermark robustness. \Cref{sec:exp_robust_retrain}, \Cref{sec:exp_robust_common_atks}, and \Cref{sec:exp_params_ans} address the following questions, respectively:
\begin{enumerate}
    \item[\textbf{Q1}:] Is RaMark radioactive against retraining attacks?
    \item[\textbf{Q2}:] How robust is RaMark against data modification attacks?
    \item[\textbf{Q3}:] \pend{How do key parameters in RaMark affect its effectiveness?}
\end{enumerate}

\subsection{Experimental Setup}
\label{sec:exp_set}

\subsubsection{Datasets.}
We evaluate all methods on two widely used public tabular datasets.
\emph{Higgs Small (HS)}~\cite{vanschoren2014openml} is derived from high-energy physics experiments and is used for binary classification.  
\emph{House-16H (HO)}~\footnote{\url{https://www.dcc.fc.up.pt/~ltorgo/Regression/census.html}} is a regression dataset for predicting house prices from 16 attributes, such as house area, number of rooms, and furnishing status.  

For each dataset, we randomly select $25\%$ of the columns to carry the watermark, that is, seven columns for HS and four columns for HO. Basic statistics of the datasets are summarized in \Cref{tab:dataset}.

\begin{table}[t]
  \caption{Statistics of datasets. \#Train, \#Valid, and \#Test are the number of data samples in the training, validation, and test splits, respectively, and \#Attri is the number of attributes.}
  \label{tab:dataset}
  \resizebox{\columnwidth}{!}{%
  \begin{tabular}{l|cccccl}
    \toprule
    \textbf{Name}        & \textbf{\#Train} & \textbf{\#Valid} & \textbf{\#Test} & \textbf{\#Attri} & \textbf{ML Task}      \\
    \midrule
    Higgs Small & 62,751  & 15,688       & 19,610 & 28         & Classification \\
    House 16H   & 14,581  & 3,646        & 4,557  & 16         & Regression   \\ 
  \bottomrule
\end{tabular}%
}
\end{table}

\subsubsection{Baseline methods.}
We compare RaMark against seven representative watermarking methods for tabular data: S2R2W~\cite{li2022secure}, WGTD~\cite{he2024watermarking}, PKF~\cite{che2025primary}, TabWak~\cite{zhu2025tabwak}, TabularMark~\cite{zheng2024tabularmark}, MUSE~\cite{fang2025muse}, and B2Mark~\cite{zheng2025b2mark}.

We use the publicly available implementations for all baselines. The code for TabularMark, PKF, TabWak, MUSE, and B2Mark was released by the original authors. The implementations of S2R2W and WGTD were provided by the authors of TabularMark and MUSE, respectively. In our verification, all implementations reproduce the performance trends reported in their corresponding papers.

Since S2R2W and TabularMark are designed for single-column watermarking, we extend them to multi-column watermarking following~\cite{shehab2007watermarking,sebe2005noise,che2025primary}: the same watermark is independently embedded into multiple columns, and final detection is decided by majority voting across watermarked columns. For S2R2W, which requires a primary key, we follow TabularMark~\cite{zheng2024tabularmark} and use the top-10 most significant bits of one attribute as the primary key.

\subsubsection{Aligning watermark strength.}
\label{sec:align_wm_strength}
A fair robustness comparison requires that different methods operate under comparable watermark strength. However, watermark strength is controlled differently across methods and is not directly comparable.

We therefore use \emph{Machine Learning Efficiency} ($\mathtt{MLE}$)~\cite{kotelnikov2023tabddpm,zhu2025tabwak,al2015efficient,shi2024tabdiff} as a common, method-independent proxy. $\mathtt{MLE}$ measures how well generated data preserve statistical properties relevant for downstream tasks, under the standard \emph{Training on Generated and Testing on Real} protocol~\cite{fekri2019generating,kotelnikov2023tabddpm,kumari2024comparative}. If generated data accurately approximate the real distribution, models trained on them generalize well to real data, yielding high $\mathtt{MLE}$. Stronger watermark perturbations typically distort the data distribution more, resulting in lower $\mathtt{MLE}$~\cite{che2025primary,zhu2025tabwak}.

Based on the above observation, we align watermark strength across different watermarking methods by imposing an $\mathtt{MLE}$ budget constraint~\cite{che2025primary}:
\begin{equation}
    \mathtt{MLE}(Q_{uw}) - \mathtt{MLE}(Q_{wm}) \leq \gamma,
\end{equation}
where $Q_{uw}$ and $Q_{wm}$ denote the unwatermarked and watermarked generated datasets, respectively. The budget $\gamma$ controls the maximum allowable utility degradation caused by watermark embedding.
In our setting, $\gamma$ also serves as a control on perturbations introduced by the watermark, since a watermark is only accepted when the resulting utility loss remains small. Therefore, the embedded dependency is not an unconstrained distortion, but a utility-aware and controlled perturbation of the generated distribution.

Following~\citet{kotelnikov2023tabddpm}, we evaluate $\mathtt{MLE}$ using CatBoost~\cite{prokhorenkova2018catboost} on both the datasets of HS and HO. For the classification task on HS, we report F1-score as $\mathtt{MLE}$; for the regression task on HO, we report the coefficient of determination $R^2$ as $\mathtt{MLE}$~\cite{kotelnikov2023tabddpm,casella2024statistical,garthwaite2002statistical}.

In all experiments, we set $\gamma=1\%$. For each method, watermark-specific hyperparameters are tuned to achieve the strongest possible watermark under this $\mathtt{MLE}$ constraint, while other parameters remain at default settings. Because all methods are restricted to the same utility degradation, their watermark strengths are aligned, which enables a fair robustness comparison.

Following the literature~\cite{kotelnikov2023tabddpm,zheng2024tabularmark}, we use $\mathtt{MLE}$ to evaluate the data utility because we focus on ML-oriented data usage scenarios, where predictive performance is the primary concern.

\subsubsection{Measuring attack strength.}
Given a watermarked dataset $Q_{wm}$ and an attacked dataset $Q_{atk}$, we measure attack strength by the resulting utility degradation:
\begin{equation}
\mathtt{MLE}(Q_{wm}) - \mathtt{MLE}(Q_{atk}).
\end{equation}
This mirrors the watermark strength alignment principle: stronger attacks induce larger utility loss. Reporting attack strength alongside watermark robustness in our experimental results allows us to directly examine the utility–watermark trade-off predicted by our theoretical analysis in \Cref{sec:theory}.

\begin{table*}[t]
\caption{Radioactivity against retraining attacks on the datasets HS and HO.}
\label{tab:retrain}
\resizebox{\textwidth}{!}{%
\begin{tabular}{c|c|c|cc|cc|cc|cc}

\thicktoprule
\multirow{2}{*}{Dataset / Measure} &
\multirow{2}{*}{Watermark Method} &
  \multirow{2}{*}{$\mathtt{MLE}(Q_{wm})$} &
  \multicolumn{2}{c|}{TabDDPM} &
  \multicolumn{2}{c|}{TabDiff} &
  \multicolumn{2}{c|}{CTAB-GAN+} &
  \multicolumn{2}{c}{TVAE} \\ \cline{4-11} 
&     &       & \multicolumn{1}{c|}{$\mathtt{MLE}(Q_{atk})$}   & $\mathtt{Det\_AUC}$  & \multicolumn{1}{c|}{$\mathtt{MLE}(Q_{atk})$}   & $\mathtt{Det\_AUC}$  & \multicolumn{1}{c|}{$\mathtt{MLE}(Q_{atk})$}   & $\mathtt{Det\_AUC}$  & \multicolumn{1}{c|}{$\mathtt{MLE}(Q_{atk})$}   & $\mathtt{Det\_AUC}$  \\ \cline{1-11}
     
\multirow{9}{*}{HS / Detectability ($\mathtt{Det\_AUC}$)} &S2R2W & 0.703$\pm$0.009 & \multicolumn{1}{c|}{0.688$\pm$0.008} & 0.51 & \multicolumn{1}{c|}{0.680$\pm$0.003} & 0.50 & \multicolumn{1}{c|}{0.634$\pm$0.008} & 0.51 & \multicolumn{1}{c|}{0.608$\pm$0.006} & 0.51 \\ \cline{2-11}


&TabularMark & 0.703$\pm$0.008 & \multicolumn{1}{c|}{0.689$\pm$0.005} & 0.51 & \multicolumn{1}{c|}{0.679$\pm$0.010} & 0.51 & \multicolumn{1}{c|}{0.639$\pm$0.003} & 0.50 & \multicolumn{1}{c|}{0.607$\pm$0.005} & 0.51 \\ \cline{2-11}

&WGTD & 0.703$\pm$0.005 & \multicolumn{1}{c|}{0.684$\pm$0.011} & 0.51 & \multicolumn{1}{c|}{0.682$\pm$0.003} & 0.53 & \multicolumn{1}{c|}{0.630$\pm$0.011} & 0.57 & \multicolumn{1}{c|}{0.604$\pm$0.009} & 0.55 \\ \cline{2-11}

&PKF & 0.703$\pm$0.006 & \multicolumn{1}{c|}{0.685$\pm$0.004} & 0.52 & \multicolumn{1}{c|}{0.679$\pm$0.005} & 0.52 & \multicolumn{1}{c|}{0.635$\pm$0.009} & 0.51 & \multicolumn{1}{c|}{0.605$\pm$0.009} & 0.51 \\ \cline{2-11}

&TabWak & 0.704$\pm$0.005 & \multicolumn{1}{c|}{0.679$\pm$0.003} & 0.51 & \multicolumn{1}{c|}{0.678$\pm$0.003} & 0.55 & \multicolumn{1}{c|}{0.630$\pm$0.008} & 0.53 & \multicolumn{1}{c|}{0.599$\pm$0.008} & 0.53 \\ \cline{2-11}

&MUSE & 0.703$\pm$0.004 & \multicolumn{1}{c|}{0.680$\pm$0.003} & 0.50 & \multicolumn{1}{c|}{0.681$\pm$0.003} & 0.54 & \multicolumn{1}{c|}{0.629$\pm$0.009} & 0.54 & \multicolumn{1}{c|}{0.599$\pm$0.009} & 0.53 \\ \cline{2-11}

&B2Mark & 0.703$\pm$0.006 & \multicolumn{1}{c|}{0.679$\pm$0.006} & 0.60 & \multicolumn{1}{c|}{0.680$\pm$0.005} & 0.60 & \multicolumn{1}{c|}{0.630$\pm$0.005} & 0.56 & \multicolumn{1}{c|}{0.600$\pm$0.007} & 0.57 \\ \cline{2-11}

&\textbf{RaMark} & 0.704$\pm$0.004 & \multicolumn{1}{c|}{0.685$\pm$0.003} & \textbf{1.00} & \multicolumn{1}{c|}{0.680$\pm$0.006} & \textbf{0.98} & \multicolumn{1}{c|}{0.639$\pm$0.010} & \textbf{0.82} & \multicolumn{1}{c|}{0.605$\pm$0.009} & \textbf{0.79} \\ \midrule

\multirow{2}{*}{Dataset / Measure} &
\multirow{2}{*}{Watermark Method} &
  \multirow{2}{*}{$\mathtt{MLE}(Q_{wm})$} &
  \multicolumn{2}{c|}{TabDDPM} &
  \multicolumn{2}{c|}{TabDiff} &
  \multicolumn{2}{c|}{CTAB-GAN+} &
  \multicolumn{2}{c}{TVAE} \\ \cline{4-11} 
&     &       & \multicolumn{1}{c|}{$\mathtt{MLE}(Q_{atk})$}   & $\mathtt{Tra\_AUC}$  & \multicolumn{1}{c|}{$\mathtt{MLE}(Q_{atk})$}   & $\mathtt{Tra\_AUC}$  & \multicolumn{1}{c|}{$\mathtt{MLE}(Q_{atk})$}   & $\mathtt{Tra\_AUC}$  & \multicolumn{1}{c|}{$\mathtt{MLE}(Q_{atk})$}   & $\mathtt{Tra\_AUC}$  \\ \cline{1-11}
     
\multirow{9}{*}{HS / Traceability ($\mathtt{Tra\_AUC}$)} &S2R2W & 0.704$\pm$0.003 & \multicolumn{1}{c|}{0.687$\pm$0.008} & 0.03 & \multicolumn{1}{c|}{0.679$\pm$0.003} & 0.01 & \multicolumn{1}{c|}{0.635$\pm$0.008} & 0.02 & \multicolumn{1}{c|}{0.607$\pm$0.008} & 0.01 \\ \cline{2-11}


&TabularMark & 0.703$\pm$0.007 & \multicolumn{1}{c|}{0.689$\pm$0.005} & 0.50 & \multicolumn{1}{c|}{0.680$\pm$0.003} & 0.51 & \multicolumn{1}{c|}{0.633$\pm$0.003} & 0.51 & \multicolumn{1}{c|}{0.599$\pm$0.008} & 0.50 \\ \cline{2-11}

&WGTD & 0.703$\pm$0.008 & \multicolumn{1}{c|}{0.688$\pm$0.012} & 0.53 & \multicolumn{1}{c|}{0.679$\pm$0.003} & 0.51 & \multicolumn{1}{c|}{0.634$\pm$0.011} & 0.50 & \multicolumn{1}{c|}{0.606$\pm$0.008} & 0.50 \\ \cline{2-11}

&PKF & 0.703$\pm$0.012 & \multicolumn{1}{c|}{0.689$\pm$0.004} & 0.52 & \multicolumn{1}{c|}{0.681$\pm$0.005} & 0.53 & \multicolumn{1}{c|}{0.633$\pm$0.010} & 0.50 & \multicolumn{1}{c|}{0.598$\pm$0.009} & 0.54 \\ \cline{2-11}

&TabWak & 0.703$\pm$0.003 & \multicolumn{1}{c|}{0.679$\pm$0.003} & 0.51 & \multicolumn{1}{c|}{0.680$\pm$0.006} & 0.51 & \multicolumn{1}{c|}{0.629$\pm$0.008} & 0.54 & \multicolumn{1}{c|}{0.601$\pm$0.011} & 0.52 \\ \cline{2-11}

&MUSE & 0.703$\pm$0.003 & \multicolumn{1}{c|}{0.679$\pm$0.003} & 0.51 & \multicolumn{1}{c|}{0.681$\pm$0.005} & 0.50 & \multicolumn{1}{c|}{0.632$\pm$0.009} & 0.53 & \multicolumn{1}{c|}{0.601$\pm$0.005} & 0.53 \\ \cline{2-11}

&B2Mark & 0.703$\pm$0.005 & \multicolumn{1}{c|}{0.680$\pm$0.006} & 0.05 & \multicolumn{1}{c|}{0.680$\pm$0.002} & 0.05 & \multicolumn{1}{c|}{0.633$\pm$0.004} & 0.03 & \multicolumn{1}{c|}{0.599$\pm$0.005} & 0.04 \\ \cline{2-11}

&\textbf{RaMark} & 0.704$\pm$0.008 & \multicolumn{1}{c|}{0.689$\pm$0.009} & \textbf{1.00} & \multicolumn{1}{c|}{0.678$\pm$0.008} & \textbf{0.97} & \multicolumn{1}{c|}{0.634$\pm$0.010} & \textbf{0.79} & \multicolumn{1}{c|}{0.605$\pm$0.009} & \textbf{0.78} \\ \thickmidrule

\multirow{2}{*}{Dataset / Measure} &
\multirow{2}{*}{Watermark Method} &
  \multirow{2}{*}{$\mathtt{MLE}(Q_{wm})$} &
  \multicolumn{2}{c|}{TabDDPM} &
  \multicolumn{2}{c|}{TabDiff} &
  \multicolumn{2}{c|}{CTAB-GAN+} &
  \multicolumn{2}{c}{TVAE} \\ \cline{4-11} 
&     &       & \multicolumn{1}{c|}{$\mathtt{MLE}(Q_{atk})$}   & $\mathtt{Det\_AUC}$  & \multicolumn{1}{c|}{$\mathtt{MLE}(Q_{atk})$}   & $\mathtt{Det\_AUC}$  & \multicolumn{1}{c|}{$\mathtt{MLE}(Q_{atk})$}   & $\mathtt{Det\_AUC}$  & \multicolumn{1}{c|}{$\mathtt{MLE}(Q_{atk})$}   & $\mathtt{Det\_AUC}$  \\ \cline{1-11}

\multirow{9}{*}{HO / Detectability ($\mathtt{Det\_AUC}$)} & S2R2W & 0.610$\pm$0.013 & \multicolumn{1}{c|}{0.597$\pm$0.002} & 0.51 & \multicolumn{1}{c|}{0.591$\pm$0.006} & 0.51 & \multicolumn{1}{c|}{0.483$\pm$0.007} & 0.52 & \multicolumn{1}{c|}{0.471$\pm$0.008} & 0.51 \\ \cline{2-11}


&TabularMark & 0.610$\pm$0.010 & \multicolumn{1}{c|}{0.597$\pm$0.005} & 0.50 & \multicolumn{1}{c|}{0.589$\pm$0.005} & 0.51 & \multicolumn{1}{c|}{0.485$\pm$0.009} & 0.50 & \multicolumn{1}{c|}{0.470$\pm$0.009} & 0.51 \\ \cline{2-11}

&WGTD & 0.610$\pm$0.006 & \multicolumn{1}{c|}{0.598$\pm$0.001} & 0.50 & \multicolumn{1}{c|}{0.591$\pm$0.003} & 0.53 & \multicolumn{1}{c|}{0.482$\pm$0.006} & 0.50 & \multicolumn{1}{c|}{0.471$\pm$0.009} & 0.50 \\ \cline{2-11}

&PKF & 0.609$\pm$0.008 & \multicolumn{1}{c|}{0.598$\pm$0.004} & 0.50 & \multicolumn{1}{c|}{0.588$\pm$0.005} & 0.52 & \multicolumn{1}{c|}{0.481$\pm$0.008} & 0.51 & \multicolumn{1}{c|}{0.469$\pm$0.009} & 0.51 \\ \cline{2-11}

&TabWak & 0.610$\pm$0.005 & \multicolumn{1}{c|}{0.591$\pm$0.003} & 0.51 & \multicolumn{1}{c|}{0.588$\pm$0.003} & 0.55 & \multicolumn{1}{c|}{0.481$\pm$0.018} & 0.53 & \multicolumn{1}{c|}{0.471$\pm$0.012} & 0.53 \\ \cline{2-11}

&MUSE & 0.609$\pm$0.004 & \multicolumn{1}{c|}{0.592$\pm$0.003} & 0.50 & \multicolumn{1}{c|}{0.588$\pm$0.003} & 0.54 & \multicolumn{1}{c|}{0.481$\pm$0.009} & 0.54 & \multicolumn{1}{c|}{0.470$\pm$0.009} & 0.53 \\ \cline{2-11}

&B2Mark & 0.609$\pm$0.005 & \multicolumn{1}{c|}{0.596$\pm$0.007} & 0.59 & \multicolumn{1}{c|}{0.590$\pm$0.005} & 0.59 & \multicolumn{1}{c|}{0.480$\pm$0.019} & 0.56 & \multicolumn{1}{c|}{0.469$\pm$0.006} & 0.56 \\ \cline{2-11}

&\textbf{RaMark} & 0.610$\pm$0.005 & \multicolumn{1}{c|}{0.598$\pm$0.005} & \textbf{1.00} & \multicolumn{1}{c|}{0.590$\pm$0.012} & \textbf{0.98} & \multicolumn{1}{c|}{0.481$\pm$0.011} & \textbf{0.80} & \multicolumn{1}{c|}{0.470$\pm$0.010} & \textbf{0.79} \\ \midrule

\multirow{2}{*}{Dataset / Measure} &
\multirow{2}{*}{Watermark Method} &
  \multirow{2}{*}{$\mathtt{MLE}(Q_{wm})$} &
  \multicolumn{2}{c|}{TabDDPM} &
  \multicolumn{2}{c|}{TabDiff} &
  \multicolumn{2}{c|}{CTAB-GAN+} &
  \multicolumn{2}{c}{TVAE} \\ \cline{4-11} 
&     &       & \multicolumn{1}{c|}{$\mathtt{MLE}(Q_{atk})$}   & $\mathtt{Tra\_AUC}$  & \multicolumn{1}{c|}{$\mathtt{MLE}(Q_{atk})$}   & $\mathtt{Tra\_AUC}$  & \multicolumn{1}{c|}{$\mathtt{MLE}(Q_{atk})$}   & $\mathtt{Tra\_AUC}$  & \multicolumn{1}{c|}{$\mathtt{MLE}(Q_{atk})$}   & $\mathtt{Tra\_AUC}$  \\ \cline{1-11}
     
\multirow{9}{*}{HO / Traceability ($\mathtt{Tra\_AUC}$)} & S2R2W & 0.609$\pm$0.002 & \multicolumn{1}{c|}{0.597$\pm$0.004} & 0.02 & \multicolumn{1}{c|}{0.588$\pm$0.003} & 0.01 & \multicolumn{1}{c|}{0.483$\pm$0.012} & 0.02 & \multicolumn{1}{c|}{0.469$\pm$0.008} & 0.01 \\ \cline{2-11}


&TabularMark & 0.610$\pm$0.007 & \multicolumn{1}{c|}{0.597$\pm$0.002} & 0.50 & \multicolumn{1}{c|}{0.591$\pm$0.003} & 0.51 & \multicolumn{1}{c|}{0.486$\pm$0.018} & 0.50 & \multicolumn{1}{c|}{0.473$\pm$0.008} & 0.50 \\ \cline{2-11}

&WGTD & 0.610$\pm$0.008 & \multicolumn{1}{c|}{0.598$\pm$0.011} & 0.50 & \multicolumn{1}{c|}{0.592$\pm$0.003} & 0.51 & \multicolumn{1}{c|}{0.481$\pm$0.007} & 0.50 & \multicolumn{1}{c|}{0.469$\pm$0.009} & 0.51 \\ \cline{2-11}

&PKF & 0.608$\pm$0.009 & \multicolumn{1}{c|}{0.598$\pm$0.005} & 0.50 & \multicolumn{1}{c|}{0.589$\pm$0.005} & 0.50 & \multicolumn{1}{c|}{0.481$\pm$0.008} & 0.50 & \multicolumn{1}{c|}{0.471$\pm$0.008} & 0.50 \\ \cline{2-11}

&TabWak & 0.610$\pm$0.005 & \multicolumn{1}{c|}{0.588$\pm$0.003} & 0.51 & \multicolumn{1}{c|}{0.592$\pm$0.002} & 0.51 & \multicolumn{1}{c|}{0.481$\pm$0.008} & 0.53 & \multicolumn{1}{c|}{0.471$\pm$0.009} & 0.50 \\ \cline{2-11}

&MUSE & 0.609$\pm$0.005 & \multicolumn{1}{c|}{0.589$\pm$0.003} & 0.51 & \multicolumn{1}{c|}{0.589$\pm$0.003} & 0.50 & \multicolumn{1}{c|}{0.480$\pm$0.009} & 0.52 & \multicolumn{1}{c|}{0.469$\pm$0.008} & 0.53 \\ \cline{2-11}

&B2Mark & 0.609$\pm$0.005 & \multicolumn{1}{c|}{0.596$\pm$0.007} & 0.06 & \multicolumn{1}{c|}{0.590$\pm$0.005} & 0.05 & \multicolumn{1}{c|}{0.480$\pm$0.019} & 0.04 & \multicolumn{1}{c|}{0.469$\pm$0.006} & 0.04 \\ \cline{2-11}

&\textbf{RaMark} & 0.610$\pm$0.007 & \multicolumn{1}{c|}{0.598$\pm$0.005} & \textbf{1.00} & \multicolumn{1}{c|}{0.589$\pm$0.007} & \textbf{0.97} & \multicolumn{1}{c|}{0.481$\pm$0.014} & \textbf{0.80} & \multicolumn{1}{c|}{0.469$\pm$0.006} & \textbf{0.79} \\ \thickbottomrule

\end{tabular}%
}
\end{table*}

\subsection{Evaluation Protocol of Robustness}
\label{sec:perform_eval}

We evaluate watermark robustness by simulating a watermark service provider (WSD) that manages $10^5$ data owners. Each owner registers a unique secret key with the WSD and embeds a watermark using that key. Given a suspicious dataset, the WSD performs verification in two stages: (i) a \emph{detecting stage}, which determines whether the dataset contains a watermark, and (ii) a \emph{tracing stage}, which identifies the owner among the $10^5$ registered data owners if a watermark is detected in the detection stage.


\subsubsection{Detectability.}
Detectability measures how well a method distinguishes watermarked datasets from unwatermarked ones. We model the detecting stage as a binary classification task over 100 datasets: 50 watermarked (positive case) and 50 unwatermarked (negative case). 

The unwatermarked datasets are independently generated by TabDDPM~\cite{kotelnikov2023tabddpm}. 
Each watermarked dataset is generated by using a distinct secret key sampled uniformly from the $10^5$ registered keys.
For post-hoc methods (S2R2W~\cite{li2022secure}, TabularMark~\cite{zheng2024tabularmark}, WGTD~\cite{he2024watermarking}, PKF~\cite{che2025primary}, B2Mark~\cite{zheng2025b2mark}), which embed watermark after data generation, we first generate datasets using TabDDPM and then embed watermarks. 
For generative methods (TabWak~\cite{zhu2025tabwak}, MUSE~\cite{fang2025muse}), which embed watermark during data generation, watermarked datasets are generated directly using their default procedures. 

To evaluate detectability,
we compute $10^5$ detection scores for each dataset against all $10^5$ keys and take the maximum score as the \emph{final score}, representing the strongest evidence of watermark presence. Detectability is quantified by the Area Under the ROC Curve ($\mathtt{AUC}$) computed from the 100 final scores of the 100 datasets, denoted as $\mathtt{Det\_AUC}$. A higher $\mathtt{Det\_AUC}$ indicates better separation between watermarked and unwatermarked datasets.

\subsubsection{Traceability.}
Traceability measures how well a method identifies the correct owner among all the data owners. We construct a binary classification task over 100 watermarked datasets including 50 positive cases and 50 negative cases. Each dataset is watermarked using a distinct secret key sampled from the $10^5$ registered keys.

For a positive case, the final score is the detection score computed using its ground-truth key. For a negative case, the final score is the maximum detection score obtained using all incorrect keys (i.e., the strongest false match among the remaining $10^5 - 1$ keys). Traceability is quantified by the $\mathtt{AUC}$ of the ROC curve computed from these 100 final scores, denoted as $\mathtt{Tra\_AUC}$. A higher $\mathtt{Tra\_AUC}$ indicates stronger owner discrimination.

\subsubsection{Evaluation with varying thresholds.}
\pend{
As described in \Cref{sec:wm_detect}, the final decision on the presence of a watermark is obtained by comparing $\mathtt{DS}(\omega)$ with a predefined threshold.
In our evaluation, we use $\mathtt{AUC}$ as the evaluation metric, which summarizes detection performance over all possible thresholds.
This is particularly useful in our setting, as different watermarking methods may produce detection scores with different scales.
By measuring performance across varying thresholds, $\mathtt{AUC}$ enables a more comprehensive and fair comparison of both detectability and traceability across methods.
Throughout the remainder of the paper, we use $\mathtt{AUC}$ to refer to either $\mathtt{Det\_AUC}$ or $\mathtt{Tra\_AUC}$ when the context is clear.
}

\subsubsection{Evaluating robustness under attacks}
To evaluate robustness under attacks, we apply each attack described in \Cref{sec:wmr_attacks} to the watermarked datasets in the positive cases before computing detection scores. 
This setting applies to both the computation of $\mathtt{Det\_AUC}$ and $\mathtt{Tra\_AUC}$.

\subsection{Radioactivity against Retraining Attack}
\label{sec:exp_robust_retrain}

We evaluate robustness under retraining attack using the four generative models introduced in \Cref{sec:wmr_attacks}: TabDDPM, TabDiff, CTAB-GAN+, and TVAE. 
A higher robustness against retraining attack means a higher radioactivity of the watermark.
For each watermarking method, we retrain the attack model on the watermarked dataset $Q_{wm}$ and generate a new dataset $Q_{atk}$. The results on HS and HO are summarized in \Cref{tab:retrain}.

As it is shown in \Cref{tab:retrain}, $\mathtt{MLE}(Q_{wm})$ reports the mean $\pm$ standard deviation over the 50 watermarked datasets in the positive cases, and $\mathtt{MLE}(Q_{atk})$ reports the corresponding results after performing retraining attack. Because all methods are constrained by the same $\mathtt{MLE}$ budget during embedding, the $\mathtt{MLE}(Q_{wm})$ values are closely aligned across methods. This confirms that watermark strengths are comparable. Similarly, for each attack model, the $\mathtt{MLE}(Q_{atk})$ values across methods are also close, since the same retraining procedure is applied. Therefore, the comparison in \Cref{tab:retrain} reflects robustness differences rather than discrepancies in watermark strength or attack strength.

\subsubsection{Main observation.}
RaMark consistently achieves the highest $\mathtt{AUC}$ under all four retraining models and on both datasets. In particular, RaMark maintains $\mathtt{Det\_AUC}$ close to $1.00$ for TabDDPM and TabDiff, and remains substantially higher than all baselines even under CTAB-GAN+ and TVAE. This empirical result directly validates the theoretical analysis in \Cref{rem:dilemma,rem:radioactivity}: when retraining preserves data utility (small distributional deviation), the elevated spectral power at frequency $\omega$ is preserved, and the detection score remains high.

\subsubsection{Why baselines fail?}
The failure of baseline methods can be traced back to how their watermark signals are embedded.

S2R2W and TabularMark rely on (virtual) primary keys derived from specific attribute values. After retraining, newly generated samples differ from the original ones, so the corresponding primary keys cannot be reconstructed. As a result, watermark localization fails and detection performance collapses.

PKF embeds weak artificial correlations into the data distribution. Although these correlations are statistically detectable in the original watermarked dataset, they form only a minor component of the distribution and are not reliably reproduced by retrained generative models. Consequently, the spectral signal weakens after retraining.

WGTD embeds the watermark through independent numeric perturbations on selected values. Retrained models do not reproduce these exact perturbations, and thus the watermark signal disappears.

TabWak depends on a deterministic mapping between latent seeds and generated samples. Retraining breaks this mapping, so the watermark cannot be reconstructed in $Q_{atk}$.

MUSE embeds the watermark through a generation-time selection rule based on discrete bucket preferences. While this creates bias in the generated dataset, the rule itself is not encoded as a stable global dependency. Retrained models approximate the overall distribution but do not preserve this procedural selection mechanism, causing the watermark to vanish.

\subsubsection{Detectability vs.\ traceability.}
In \Cref{tab:retrain}, S2R2W and B2Mark exhibit $\mathtt{Tra\_AUC}$ values close to zero under retraining. Both methods encode the watermark as a bit sequence and compute detection scores using normalized Hamming similarity. After retraining, the recovered bit sequence is severely distorted, so the score under the correct key (positive case) becomes lower than the maximum score among incorrect keys (negative cases). This reversal drives $\mathtt{Tra\_AUC}$ toward zero. 

By contrast, their $\mathtt{Det\_AUC}$ values remain near $0.5$, because in detectability evaluation both positive and negative cases use the maximum score over all keys, resulting in statistically indistinguishable score distributions after retraining.

\subsubsection{Effect of retraining strength.}
Across TabDDPM, TabDiff, CTAB-GAN+, and TVAE, the $\mathtt{MLE}(Q_{atk})$ values decrease progressively, indicating increasing distributional distortion caused by retraining. As predicted by the theoretical trade-off in \Cref{sec:theory}, the $\mathtt{AUC}$ values of RaMark also decrease as $\mathtt{MLE}(Q_{atk})$ decreases. However, even when $\mathtt{MLE}(Q_{atk})$ drops by approximately $10\%\sim 15\%$ relative to $\mathtt{MLE}(Q_{wm})$, RaMark still maintains substantially higher $\mathtt{AUC}$ than all baselines. This confirms that the watermark embedded by RaMark is tightly coupled to the data distribution and therefore persists under distribution-preserving retraining.

\begin{figure}[t]
\centering
     \includegraphics[width=1.0\columnwidth]{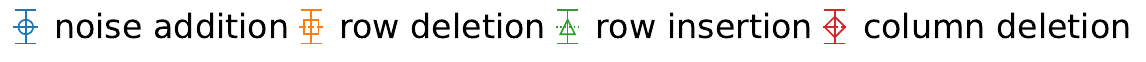}\\
    \vspace{\figurevreduce}
    \subfigure[$\mathtt{MLE}$ on HS]{
        \includegraphics[width=\figurewidthfour]{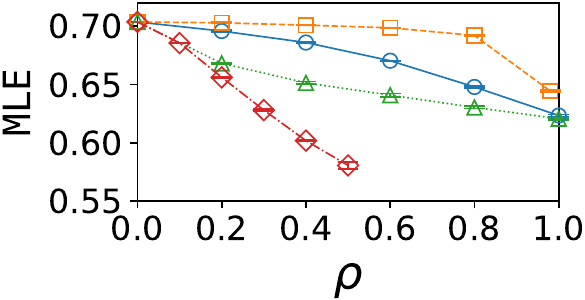}
	\label{fig:detect_MLE_uniform_alteration_higgs}
    }
    \subfigure[$\mathtt{MLE}$ on HO]{
        \includegraphics[width=\figurewidthfour]{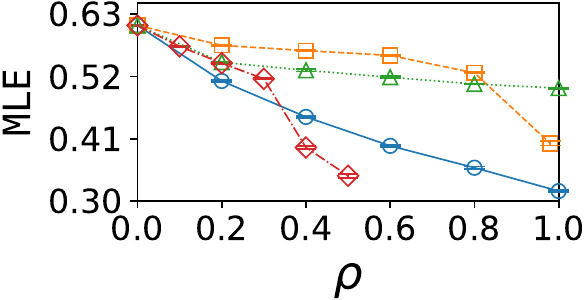}
	\label{fig:detect_MLE_uniform_alteration_house}
    }
\caption{The mean and standard deviation of the $\mathtt{MLE}(Q_{wm})$ and $\mathtt{MLE}(Q_{atk})$ for all the watermarking methods in the detectability experiments. 
The $x$-axis shows the attack strength $\rho\in\{\rho_{na}, \rho_{rd}, \rho_{ri}, \rho_{cd}\}$ of each data modification attack.
The $y$-axis shows the $\mathtt{MLE}$ for each attack. 
The leftmost point on each curve shows $\mathtt{MLE}(Q_{wm})$ when the attack strength is zero, which means watermarked datasets are not attacked. 
The other points show $\mathtt{MLE}(Q_{atk})$.
The legend shows the markers for different data modification attacks.
The standard deviation is shown by the error bar of each point on the curves, which can be better viewed when zoomed in.
}
\Description{}
\label{fig:MLE_manipulation_atk}
\end{figure}

\begin{figure*}[t]
\centering
     \includegraphics[height=0.045\columnwidth]{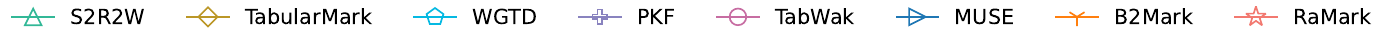}\\
     \vspace{\figurevreduce}
    \subfigure[Noise Addition, HS]{
        \includegraphics[width=\figurewidthone]{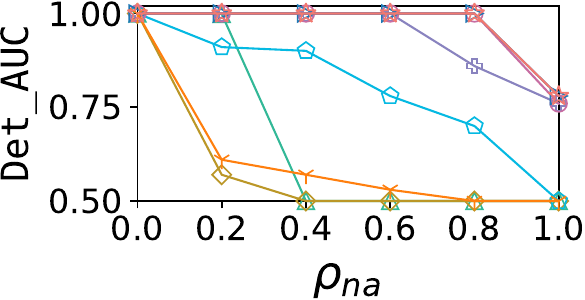}
	\label{fig:detect_auc_uniform_alteration_higgs}
    }
    \hspace{\figurehreduce}
    \subfigure[Noise Addition, HO]{
        \includegraphics[width=\figurewidthone]{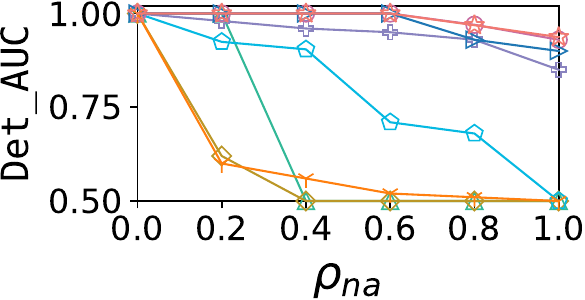}
	\label{fig:detect_auc_uniform_alteration_house}
    }
    \subfigure[Row Deletion, HS]{
        \includegraphics[width=\figurewidthone]{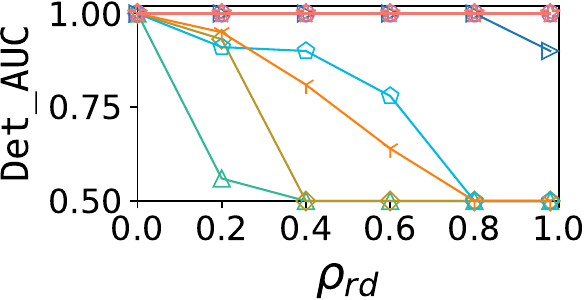}
        \label{fig:detect_auc_row_deletion_higgs}
    }
    \hspace{\figurehreduce}
    \subfigure[Row Deletion, HO]{
        \includegraphics[width=\figurewidthone]{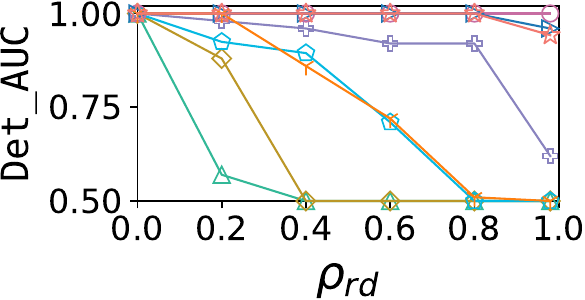}
        \label{fig:detect_auc_row_deletion_house}
    }
    \subfigure[Row Insertion, HS]{
        \includegraphics[width=\figurewidthone]{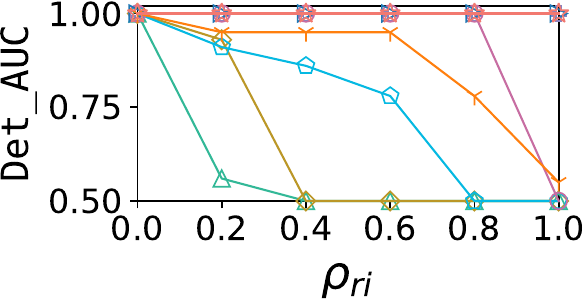}
        \label{fig:detect_auc_row_insertion_higgs}
    }
    \hspace{\figurehreduce}
    \subfigure[Row Insertion, HO]{
        \includegraphics[width=\figurewidthone]{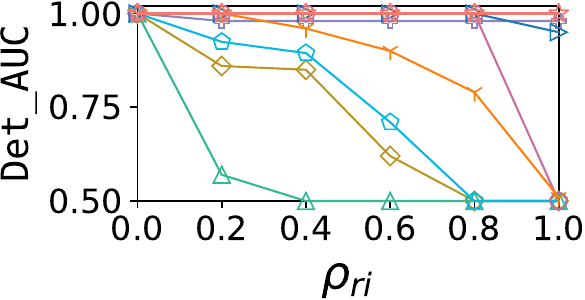}
        \label{fig:detect_auc_row_insertion_house}
    }
    \subfigure[Column Deletion, HS]{
        \includegraphics[width=\figurewidthone]{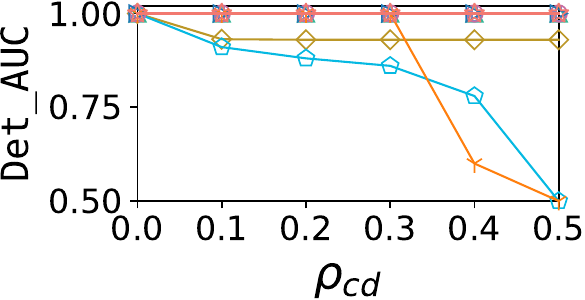}
        \label{fig:detect_auc_column_deletion_higgs}
    }
    \hspace{\figurehreduce}
    \subfigure[Column Deletion, HO]{
        \includegraphics[width=\figurewidthone]{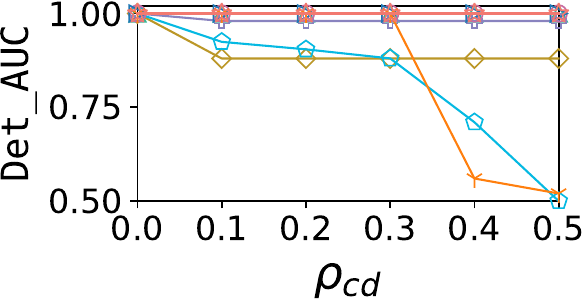}
        \label{fig:detect_auc_column_deletion_house}
    }
\caption{The $\mathtt{Det\_{AUC}}$ under different attack strengths. The attack strengths $\rho_{na}, \rho_{rd}, \rho_{ri}, \rho_{cd}$ are defined in 
\Cref{sec:wmr_attacks}.}
\Description{}
\label{fig:detect_auc}
\end{figure*}

\begin{figure*}[t]
\centering
     \includegraphics[height=0.045\columnwidth]{figures/man_atk_results/legend_auc}\\
     \vspace{\figurevreduce}
    \subfigure[Noise Addition, HS]{
        \includegraphics[width=\figurewidthone]{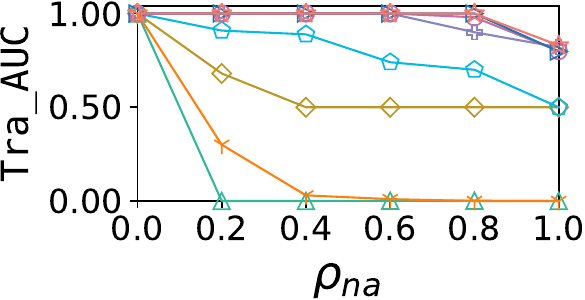}
	\label{fig:trace_auc_uniform_alteration_higgs}
    }
    \hspace{\figurehreduce}
    \subfigure[Noise Addition, HO]{
        \includegraphics[width=\figurewidthone]{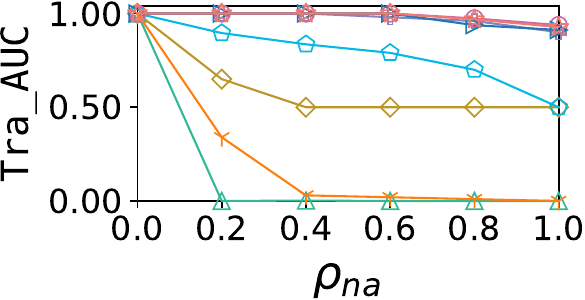}
	\label{fig:trace_auc_uniform_alteration_house}
    }
    \subfigure[Row Deletion, HS]{
        \includegraphics[width=\figurewidthone]{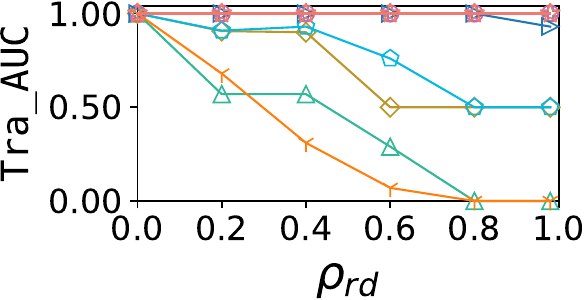}
        \label{fig:trace_auc_row_deletion_higgs}
    }
    \hspace{\figurehreduce}
    \subfigure[Row Deletion, HO]{
        \includegraphics[width=\figurewidthone]{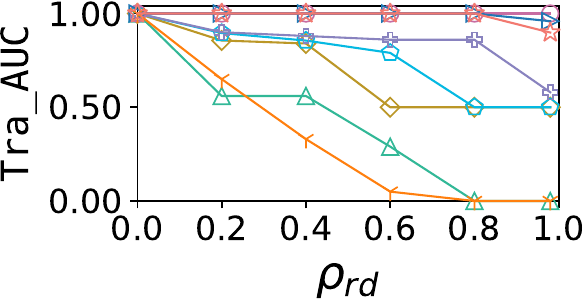}
        \label{fig:trace_auc_row_deletion_house}
    }
    \subfigure[Row Insertion, HS]{
        \includegraphics[width=\figurewidthone]{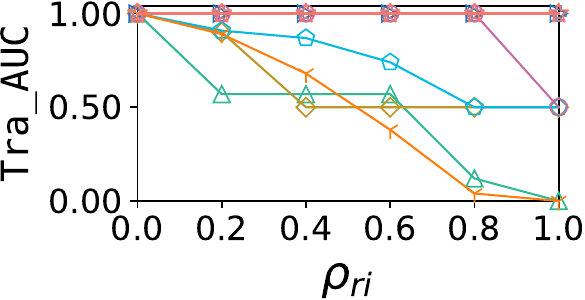}
        \label{fig:trace_auc_row_insertion_higgs}
    }
    \hspace{\figurehreduce}
    \subfigure[Row Insertion, HO]{
        \includegraphics[width=\figurewidthone]{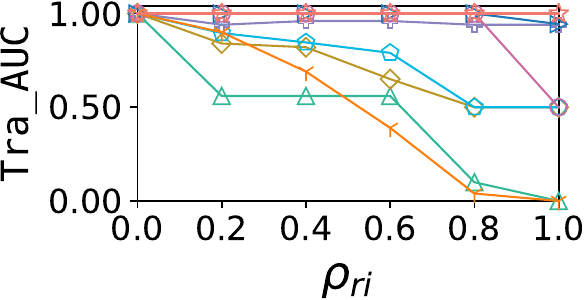}
        \label{fig:trace_auc_row_insertion_house}
    }
    \subfigure[Column Deletion, HS]{
        \includegraphics[width=\figurewidthone]{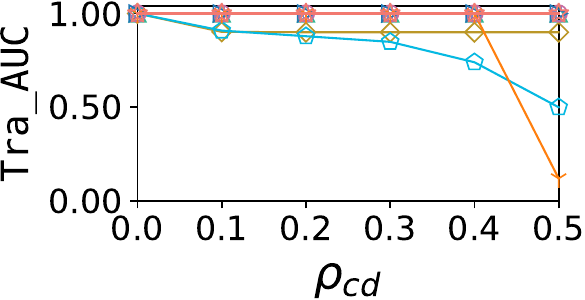}
        \label{fig:trace_auc_column_deletion_higgs}
    }  
    \hspace{\figurehreduce}
    \subfigure[Column Deletion, HO]{
        \includegraphics[width=\figurewidthone]{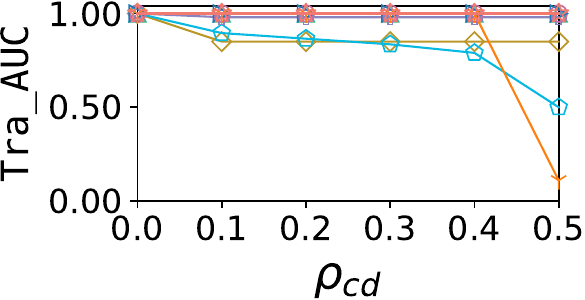}
        \label{fig:trace_auc_column_deletion_house}
    }
\caption{The $\mathtt{Tra\_{AUC}}$ under different attack strengths. The attack strengths $\rho_{na}, \rho_{rd}, \rho_{ri}, \rho_{cd}$ are defined in 
\Cref{sec:wmr_attacks}.}
\Description{}
\label{fig:trace_auc}
\end{figure*}

\subsection{Robustness to Data Modification Attacks}
\label{sec:exp_robust_common_atks}

We next evaluate robustness under the four data modification attacks defined in \Cref{sec:wmr_attacks}, including noise addition (na), row deletion (rd), row insertion (ri), and column deletion (cd).


\subsubsection{Utility under attack.}
\Cref{fig:MLE_manipulation_atk} reports $\mathtt{MLE}$ performance when attack strength increases. 
The curves from detectability and traceability experiments are nearly identical because they are derived from watermarked datasets with the same underlying distribution. 
For clarity, we therefore report only the $\mathtt{MLE}$ curves from the detectability experiments. 
Moreover, since all watermarking methods exhibit comparable $\mathtt{MLE}$ under the same attack on the same dataset, we report the mean and standard deviation of $\mathtt{MLE}$ across all methods for each attack setting.


%
%

We can see from \Cref{fig:MLE_manipulation_atk} that the following three observations are consistent across all attacks. 
First, $\mathtt{MLE}(Q_{wm})$ values are nearly identical across different watermarking methods, which confirms that the watermark strengths are aligned by the same $\mathtt{MLE}$ budget constraint. 
Second, $\mathtt{MLE}(Q_{atk})$ values are comparable with small standard deviation across watermarking methods under the same attack configuration, which means comparable attack strength is applied to all methods and thus ensures fair comparison of their watermark robustness. 
Third, as attack strength increases, $\mathtt{MLE}$ consistently decreases for all methods, which indicates stronger distortion of the data distribution. In practice, this means that increasing attack strength inevitably reduces data utility, which reinforces the trade-off discussed in \Cref{rem:dilemma}.

\subsubsection{Detectability and traceability under attack.}
\Cref{fig:detect_auc} and \Cref{fig:trace_auc} report $\mathtt{Det\_AUC}$ and $\mathtt{Tra\_AUC}$, respectively, under varying attack strengths. We analyze each attack type below.

\paragraph{Noise addition.}
Noise addition perturbs every attribute value with random noise. Methods that rely on precise numeric perturbations of selected table entries (S2R2W, TabularMark, WGTD, B2Mark) exhibit rapidly decreasing $\mathtt{AUC}$, since random noise directly disrupts their pointwise modifications. In contrast, PKF, TabWak, MUSE, and RaMark maintain higher $\mathtt{AUC}$ because their watermark signals are distributed across global statistical structure rather than individual values. Since moderate noise affects local values more than global dependencies, these methods are more robust.

\paragraph{Row deletion.}
Row deletion removes a fraction of samples. S2R2W degrades quickly because it modifies only a subset of samples; deleting these samples directly removes watermark evidence. TabularMark, WGTD, and B2Mark also degrade because their detection relies on statistical aggregation over selected ranges; reducing sample size increases variance and weakens detection reliability. By contrast, PKF, TabWak, MUSE, and RaMark distribute watermark signals across many rows through attribute-level dependencies. Partial row removal preserves much of this structure, resulting in stronger robustness.

\paragraph{Row insertion.}
Row insertion adds new samples. S2R2W and TabularMark are vulnerable because they depend on virtual primary keys (VPKs); inserted rows may introduce key conflicts, leading to misidentification of watermarked samples~\cite{gort2020double}. WGTD and B2Mark also degrade because inserted rows alter attribute frequencies within watermark-encoded ranges. In contrast, PKF, TabWak, MUSE, and RaMark do not rely on VPKs and aggregate watermark signals across all rows. Inserted rows dilute but do not eliminate the embedded structural dependencies, so $\mathtt{AUC}$ remains comparatively high.

\paragraph{Column deletion.}
Column deletion removes a fraction of attributes. WGTD, TabularMark, and B2Mark degrade because their detection relies on selected attribute subsets; removing these columns reduces valid statistical evidence. S2R2W remains relatively stable because it embeds watermark redundantly across columns, allowing detection as long as at least one watermarked column survives. PKF, TabWak, MUSE, and RaMark also maintain high $\mathtt{AUC}$ because their watermark is encoded through joint correlations in latent representations. Removing part of the attributes weakens but does not eliminate these correlations, thus the watermark signal is still preserved.

\subsubsection{Summary.}
Across all four data modification attacks and both datasets, RaMark achieves the highest or near-highest $\mathtt{AUC}$ under comparable utility degradation. These empirical results align with the theoretical analysis in \Cref{rem:dilemma} and \Cref{rem:robustness}, that is, as long as the attacked dataset remains distributionally close to the watermarked dataset to preserve data utility, the watermark signal, which is embedded as a stable distribution-level sinusoidal dependency among attributes, remains detectable.

   

\begin{figure}[t]
\centering
    \includegraphics[height=0.066\columnwidth]{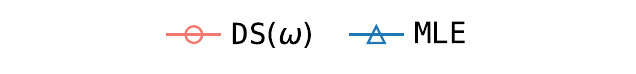}\\
     \vspace{\figurehreducetest}
    \subfigure[HS]{
       \includegraphics[width=\figurewidthfive]{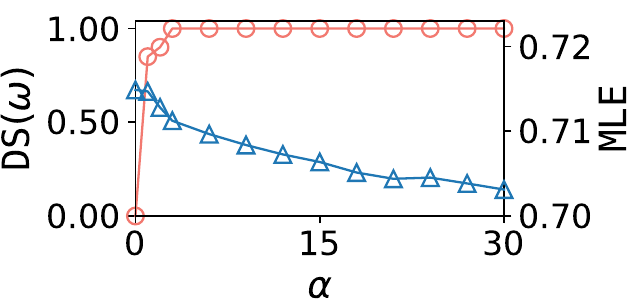}
       \label{fig:param_ans_alpha_hs}
   }
\hspace{\figurehreducenew}
   \subfigure[HO]{
       \includegraphics[width=\figurewidthfive]{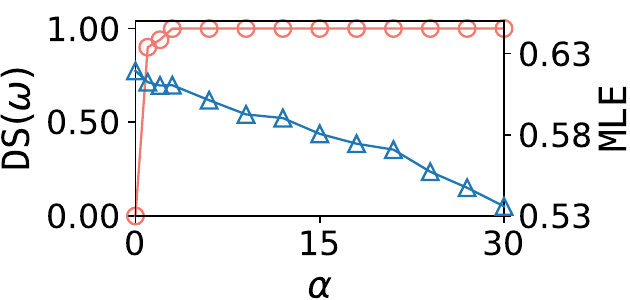}
       \label{fig:param_ans_alpha_ho}
   }
\caption{The effect of the guidance strength $\alpha$. }
\label{fig:param_ans_s}
\end{figure}

\subsection{\pend{Parameter Analysis}}
\label{sec:exp_params_ans}
\pend{
RaMark involves two types of parameters with distinct roles. 
The first is the \emph{guidance strength} $\alpha$ in \Cref{eq:final_zt}, which controls the strength of watermark guidance during the diffusion sampling process.
The second is a set of \emph{mapping parameters}, including the bin width $\beta$ and the scaling factor $s$ in \Cref{eq:mapping_x}, which determine how data samples are mapped into the discrete-time signal used for spectral analysis. 

In this subsection, we focus on analyzing $\alpha$. 
The sensitivity of $\beta$ and $s$ is provided in Appendix~\ref{app:param_analysis_beta_s}, where we show that RaMark remains stable across a wide range of these signal-construction parameters.
}

As shown in \Cref{fig:param_ans_s}, we study how $\alpha$ affects the trade-off between watermark detectability and data utility. we report both the detection score $DS(\omega)$ and $\mathtt{MLE}$ under different values of $\alpha$ on HS and HO.

When $\alpha$ increases, $\mathtt{DS}(\omega)$ increases monotonically and quickly approaches $1.0$. This behavior follows directly from \Cref{eq:final_zt}: a larger $\alpha$ amplifies the mean shift $\alpha(\Sigma \cdot g_z)$ in each reverse denoising step, which steers generated samples closer to the predefined sinusoidal dependency. Consequently, the spectral power $L(\omega)$ at frequency $\omega$ increases, the false alarm probability $\mathtt{FAP}(\omega)$ decreases, and the detection score $\mathtt{DS}(\omega)=1-\mathtt{FAP}(\omega)$ increases. When $\alpha$ is sufficiently large, the embedded watermark becomes highly prominent and is detected with near-perfect confidence.

At the same time, $\mathtt{MLE}$ gradually decreases as $\alpha$ increases. 
This is because increasing $\alpha$ strengthens the watermark guidance term and shifts samples further away from the distribution learned from $Q_{ori}$. While this improves watermark alignment, it also introduces distributional distortion that reduces sample fidelity and degrades the performance of downstream learning tasks, which manifests as lower $\mathtt{MLE}$.

Overall, the results demonstrate a clear and controlled trade-off governed by $\alpha$. Smaller $\alpha$ preserves data utility (high $\mathtt{MLE}$) but yields weaker watermark signals (lower $\mathtt{DS}(\omega)$), whereas larger $\alpha$ strengthens watermark signal at the cost of reduced data utility. This trade-off between watermark and data utility is consistent with prior watermarking studies~\cite{che2025primary,zheng2025b2mark,kamran2018comprehensive,kumar2020recent} and aligns with the theoretical utility–watermark trade-off established in \Cref{sec:theory}.
\pend{In practice, setting $\alpha$ in the range of $[5, 10]$ is sufficient to achieve strong detectability while maintaining good data utility on both datasets.}

\section{Conclusion}
\label{sec:conclusions}


In this paper, we study radioactive watermarking for generated tabular data under retraining attacks, where an adversary trains a new generative model on a watermarked dataset and regenerates data that removes the watermark signal while preserving data utility. 
This problem arises in machine learning systems for generated data sharing and trading, especially in privacy-sensitive domains where the original data cannot be released directly. 
In such settings, watermarking provides a principled mechanism for ownership verification through detection and traceability after sharing or redistribution. 
We show that existing watermarking methods fail to achieve radioactivity, as their signals are not embedded as intrinsic distributional properties and are therefore not preserved under retraining.

To address this challenge, we propose RaMark, which embeds watermark signals as sinusoidal dependencies in the data distribution through watermark-guided diffusion sampling.
Our theoretical analysis establishes a principled link between distribution preservation, spectral power preservation, and watermark robustness. 
In particular, we show that preserving data utility implies persistence of the watermark signal, which leads to an inherent utility–watermark trade-off for adversaries. 
Extensive experiments on two real-world datasets demonstrate that RaMark consistently achieves substantially higher detectability and traceability than state-of-the-art baselines while operating under the same data utility budget under both retraining attacks and data modification attacks.

\pend{RaMark currently focuses on tabular datasets with at least two continuous-valued attributes. Extending radioactive watermarking to categorical or textual attributes remains an important direction for future work.}
\bibliographystyle{ACM-Reference-Format}
\bibliography{references}

\clearpage
\appendix




\section{Appendix}
\label{sec:appendix}

\subsection{Theoretical Analysis and Proof}
\label{app:the_proof}

\NameA*

\begin{proof}
By the definition of $p_\theta(z_t \mid z_{t+1})$, we have
\begin{equation}
\log p_\theta(z_t \mid z_{t+1}) = -\frac{1}{2} (z_t - \mu)^T \Sigma^{-1} (z_t - \mu) + A, 
\end{equation}
where $A$ is a constant.


Then we approximate $\log \Pr(W \mid z_t)$ using a Taylor expansion around $z_t=\mu$ as
\begin{equation}
\begin{aligned}
    &\log \Pr(W \mid z_t) \\
    \approx & \log \Pr(W \mid z_t) \big|_{z_t=\mu} + (z_t - \mu)
      \nabla_{z_t} \log \Pr(W \mid z_t)\big|_{z_t=\mu} \\
    = & (z_t - \mu) g_z + A_1.
\end{aligned}
\end{equation}
Here, $g_z =\nabla_{z_t} \log \Pr(W \mid z_t)\big|_{z_t=\mu}$, and $A_1$ is a constant. This gives
\begin{equation}
\begin{aligned}
&\log \big( p_\theta(z_t \mid z_{t+1}) \Pr(W \mid z_t) \big) \\
\approx
&-\frac{1}{2}(z_t - \mu)^{T} \Sigma^{-1} (z_t - \mu)
    + (z_t - \mu) g_z + A_2 \\
=& -\frac{1}{2} (z_t - \mu - \Sigma \cdot  g_z)^{T}
        \Sigma^{-1}
        (z_t - \mu - \Sigma \cdot  g_z)
+ \frac{1}{2} g^{\top} \Sigma \cdot  g_z + A_2 \\
= & -\frac{1}{2} (z_t - \mu - \Sigma \cdot  g_z)^{T}
        \Sigma^{-1}
        (z_t - \mu - \Sigma \cdot  g_z)
+ A_3 \\
\end{aligned}
\end{equation}
The quadratic term in the above expression is exactly the exponent of a Gaussian density with mean $\mu+\Sigma \cdot  g_z$ and covariance $\Sigma$. 
We have thus found that the conditional reverse transition can be approximated by a Gaussian similar to the unconditional reverse transition, but with its mean shifted by $\Sigma \cdot  g_z$.
\end{proof}

\maintheoremA*

\begin{proof}
Without loss of generality, we assume
$\bar{u}_j^A=\bar{u}_j^B=\bar{u}_j$, $\forall j\in\{1,\dots,\ell\}$, due to the following reasons: 
(1) $S_A$ and $S_B$ are mapped by $\varphi$ using the same secret key, thus the $u$-coordinates of the same bin are the same; and
(2) extra bins in either $S_A$ and $S_B$ can be clipped to ensure the one-to-one match of the $u$-coordinates in $S_A$ and $S_B$.


Since we focus only on the spectral power at frequency $\omega$, we simplify notation by writing $L_A(\omega)$ and $L_B(\omega)$ as $L_A$ and $L_B$, respectively.

By the triangle
inequality,
\begin{equation}
\label{eq:tri-bar}
\begin{aligned}
    &| L_A(\omega) - L_B(\omega) | \\
    = & | \left( L_A(\omega)-\mathbb{E}[L_A(\omega)] \right) + \left( \mathbb{E}[L_A(\omega)]-\mathbb{E}[L_B(\omega)] \right) \\
    & \quad + \left( \mathbb{E}[L_B(\omega)] - L_B(\omega) \right) | \\
    \leq & 
    | L_A(\omega) - \mathbb{E}[L_A(\omega)]| 
    + \left| L_B(\omega) - \mathbb{E}[L_B(\omega)] \right| \\
    & \quad + \left| \mathbb{E}[L_A(\omega)] - \mathbb{E}[L_B(\omega)] \right|.
\end{aligned}
\end{equation}

Hence, for any $\epsilon>0$,
\begin{equation}
\begin{aligned}
\Pr( \left| L_A(\omega) - L_B(\omega) \right| \geq \epsilon) \leq
\Pr \big( 
& \left| L_A(\omega) - \mathbb{E}[L_A(\omega)] \right| \\
& + \left|L_B(\omega) - \mathbb{E}[L_B(\omega)]\right| \\
& + \left|\mathbb{E}[L_A(\omega)] - \mathbb{E}[L_B(\omega)]\right| \geq \epsilon
\big).
\end{aligned}
\end{equation}

By the union bound, 
$\Pr( \left| L_A(\omega) - L_B(\omega) \right| \geq \epsilon)$ is divided to three parts:
\begin{equation}
\label{eq:3parts}
\begin{aligned}
& \Pr( \left| L_A(\omega) - L_B(\omega) \right| \geq \epsilon) \\
\leq &
\Pr \left( \left|L_A(\omega) - \mathbb{E}[L_A(\omega)] \right| \geq \frac{\epsilon}{3} \right) 
+ \Pr \left( \left|L_B(\omega) - \mathbb{E}[L_B(\omega)] \right| \geq \frac{\epsilon}{3}\right) \\
+ & \Pr\left( \left| \mathbb{E}[L_A(\omega)] - \mathbb{E}[L_B(\omega)] \right| \geq \frac{\epsilon}{3} \right)
\end{aligned}
\end{equation}

We next analyze the first two terms and the third term in \Cref{eq:3parts} separately.
\paragraph{(I) \textbf{Concentration via McDiarmid's inequality}.}
We apply McDiarmid's inequality~\cite{mcdiarmid_concentration_1998} to $L_A(\omega)$, and the same argument applies to $L_B(\omega)$.
For McDiarmid's inequality, we need coordinate-wise sensitivity constants $c_k$~\cite{mcdiarmid_concentration_1998}. 
Formally, 
\begin{equation}
\label{eq:def_ck}
    c_k := \sup_{S, S'} |L(S) - L(S')|
\end{equation}
where $S$ and $S'$ only $k$-the entry differs. 

Since sampling times are identical and fixed across the two discrete-time signals, i.e., 
$\bar{u}_j^A=\bar{u}_j^B=\bar{u}_j$, $\forall j\in\{1,\dots,\ell\}$, we have
\begin{equation}
\begin{aligned}
|L(S)-L(S')|
&\le \sup_{\bar{v}_k\in[-V,V]}\Big|\frac{\partial L}{\partial \bar{v}_k}\Big|\cdot |\bar{v}_k-\bar{v}'_k|\\
&\le 2V\sup_{\bar{v}_k\in[-V,V]}\Big|\frac{\partial L}{\partial \bar{v}_k}\Big|. 
\end{aligned}
\end{equation}

Thus,
\begin{equation}
    c_k \le 2V\sup_{\bar{v}_k\in[-V,V]} \left|\frac{\partial L}{\partial \bar{v}_k}\right|.
\end{equation}

By the definition of Lomb-Scargle Periodogram\cite{vanderplas2018understanding},
\begin{equation}
\label{eq:lsp_def}
L(S)=\frac{1}{2\ell\sigma^2}\Big(\frac{N_c^2}{D_c}+\frac{N_s^2}{D_s}\Big),
\end{equation}
where
\[
\begin{aligned}
N_c &= \sum_{j=1}^{\ell} \bar{v}_j \cos\bigl(2\pi\omega(\bar{u}_j-\tau)\bigr), &
D_c &= \sum_{j=1}^{\ell} \cos^2\bigl(2\pi\omega(\bar{u}_j-\tau)\bigr),\\
N_s &= \sum_{j=1}^{\ell} \bar{v}_j \sin\bigl(2\pi\omega(\bar{u}_j-\tau)\bigr), &
D_s &= \sum_{j=1}^{\ell} \sin^2\bigl(2\pi\omega(\bar{u}_j-\tau)\bigr),
\end{aligned}
\]
and $\tau$ is the phase-shift term depending only on the sample times $\bar{u}_j$, i.e., 
\[
\tau = \frac{1}{4 \pi \omega} \tan^{-1} \Bigl( \frac{\sum_{j=1}^l \sin{(4 \pi \omega \bar{u}_j)}}{\sum_{j=1}^l \cos{(4 \pi \omega \bar{u}_j)}} \Bigr).
\]

Since $D_c$, $D_s$ and $\tau$ depend only on $\bar{u}_j$, not on the $\bar{v}_j$,
according to the Quotient rule\footnote{\url{https://en.wikipedia.org/wiki/Quotient_rule}},
differentiating with respect to $\bar{v}_k$ is
\begin{equation}
   \frac{\partial L}{\partial \bar{v}_k}
=
\frac{1}{2\ell}\left[
\frac{\partial(\sigma^{-2})}{\partial \bar{v}_k}\Big(\frac{N_c^2}{D_c}+\frac{N_s^2}{D_s}\Big)
+\sigma^{-2}\Big( 2\frac{N_c}{D_c}\frac{\partial N_c}{\partial \bar{v}_k}
+2\frac{N_s}{D_s}\frac{\partial N_s}{\partial \bar{v}_k}\Big)
\right]. 
\end{equation}

Furthermore, we have
\begin{equation}
    \frac{\partial(\sigma^{-2})}{\partial \bar{v}_k} = \frac{-2\bar{v}_k}{\ell\sigma^4},
\end{equation}

\begin{equation}
    \frac{\partial N_c}{\partial \bar{v}_k}=\cos(2\pi\omega(\bar{u}_k-\tau)),
\end{equation}

\begin{equation}
    \frac{\partial N_s}{\partial \bar{v}_k}=\sin(2\pi\omega(\bar{u}_k-\tau)).
\end{equation}

Using the bounds \(|\bar{v}_j|\le V\), \(|N_c|,|N_s|\le \ell V\),
and \(D_c,D_s\ge \ell/2\), we obtain the absolute bounds
\begin{equation}\label{eq:partial-bound-bar}
\Big|\frac{\partial L}{\partial \bar{v}_k}\Big|
\le \frac{1}{\ell}\Big(\frac{4V^3}{\sigma^4}+\frac{4V}{\sigma^2}\Big)
= \frac{4V}{\ell\sigma^2}\Big(1+\frac{V^2}{\sigma^2}\Big).
\end{equation}
Consequently,
\[
c_k \le 2V\cdot \frac{4V}{\ell\sigma^2}\Big(1+\frac{V^2}{\sigma^2}\Big)
= \frac{8V^2}{\ell\sigma^2}\Big(1+\frac{V^2}{\sigma^2}\Big),
\]
and therefore
\begin{equation}
\label{eq:sum-cj2-bar}
\begin{aligned}
& \sum_{k=1}^{\ell} c_k^2
\le \ell \cdot \Big(\frac{8V^2}{\ell\sigma^2}\Big)^2\Big(1+\frac{V^2}{\sigma^2}\Big)^2 \\
= &\frac{1}{\ell}\cdot \frac{64V^4}{\sigma^4}\Big(1+\frac{V^2}{\sigma^2}\Big)^2
= \frac{1}{\ell}\cdot \frac{64V^4(\sigma^2+V^2)^2}{\sigma^8}.
\end{aligned}
\end{equation}

Apply McDiarmid's inequality with $\delta=\epsilon/3$:
\begin{equation}
    \Pr \big(|L_A(\omega)-\mathbb{E}[L_A(\omega)]|\ge \delta \big) \le 2\exp\Big(-\frac{2\delta^2}{\sum_{k=1}^\ell c_k^2}\Big).
\end{equation}

Combining \Cref{eq:sum-cj2-bar} and $\delta=\epsilon/3$, we have
\begin{equation}
\label{eq:mcdiarmid-final-bar}
\begin{aligned}
&\Pr\big(|L_A(\omega)-\mathbb{E}[L_A(\omega)]| \ge \frac{\epsilon}{3}\big) \\
\le & 2\exp\Big(-\frac{\ell \epsilon^2\sigma^8}{288 V^4(\sigma^2+V^2)^2}\Big) \\
=& 2\exp\Big(-\frac{\ell \epsilon^2 \sigma^4}{288 V^4(1+V^2/\sigma^2)^2}\Big).
\end{aligned}
\end{equation}
The same estimate holds for $L_B(\omega)$. Combining the two concentration terms gives
\begin{equation}
\begin{aligned} 
& \Pr\big(|L_A(\omega)-\mathbb E[L_A(\omega)]|\ge \frac{\epsilon}{3}\big)
+
\Pr\Big(|L_B(\omega)-\mathbb E[L_B(\omega)]|\ge \frac{\epsilon}{3}\Big) \\
& \le 4\exp\Big(-\frac{\ell \epsilon^2 \sigma^4}{288 V^4(1+V^2/\sigma^2)^2}\Big).
\end{aligned}
\end{equation}

\paragraph{(II) \textbf{Deterministic expectation difference via Wasserstein bound}.}
The third term in \Cref{eq:3parts} is deterministic:
\begin{equation}
\begin{aligned}
\label{eq:indicator}
& \Pr\big(|\mathbb E[L_A(\omega)]-\mathbb E[L_B(\omega)]|\ge\frac{\epsilon}{3}\big) \\
= &
\mathbbm{1}\big\{|\mathbb E[L_A(\omega)]-\mathbb E[L_B(\omega)]|\ge\frac{\epsilon}{3}\big\} \\
\le & \frac{3|\mathbb E[L_A(\omega)]-\mathbb E[L_B(\omega)]|}{\epsilon}.
\end{aligned}
\end{equation}

By Kantorovich--Rubinstein duality~\cite{edwards2011kantorovich}, if $f:\mathbb{R}^\ell\to\mathbb{R}$
is $K$-Lipschitz with respect to $\ell^1$ then
$|\mathbb E[f(X)]-\mathbb E[f(Y)]|\le K \mathcal W_1(X,Y)$. 
Viewing
$L$ as a function of the $\ell$-vector $\bar{v}$ (with
$\bar{u}_j$ fixed) and using \Cref{eq:partial-bound-bar}, we obtain the Lipschitz constant $K$ as follows.
\begin{equation}
\begin{aligned}
    K &= \sum_{k=1}^\ell \sup_{\bar{v}\in[-V,V]^\ell}\Big|\frac{\partial L}{\partial \bar{v}_k}\Big| \\
    & \le \ell\cdot \frac{4V}{\ell\sigma^2}\Big(1+\frac{V^2}{\sigma^2}\Big) \\
    & = \frac{4V(V^2+\sigma^2)}{\sigma^4}.
\end{aligned}
\end{equation}
Hence
\begin{equation}
\begin{aligned}
\label{eq:const_re}
& |\mathbb E[L_A(\omega)]-\mathbb E[L_B(\omega)]| \\
& \le K \cdot \mathcal W_1(\overline P_A,\overline P_B)
= \frac{4V(V^2+\sigma^2)}{\sigma^4} \mathcal{W}_1(\overline P_A,\overline P_B).
\end{aligned}
\end{equation}
Based on \Cref{eq:indicator} and \Cref{eq:const_re}, we get
\begin{equation}
\Pr\Big(|\mathbb E[L_A(\omega)]-\mathbb E[L_B(\omega)]|\ge\frac{\epsilon}{3}\Big)
\le \frac{12V(V^2+\sigma^2)}{\sigma^4}\cdot\frac{\mathcal W_1(\overline P_A,\overline P_B)}{\epsilon}.
\end{equation}

\paragraph{(III) \textbf{Combine (I) and (II)}}

Substituting the bounds from (I) and (II) into \Cref{eq:3parts} yields
\begin{equation}
\begin{aligned}
&\Pr( \left| L_A(\omega) - L_B(\omega) \right| \geq \epsilon) \\
\le &
4\exp\Big(-\frac{\ell\,\epsilon^2\,\sigma^4}{288\,V^4(1+V^2/\sigma^2)^2}\Big)
+
\frac{12V(V^2+\sigma^2)}{\sigma^4}\cdot\frac{\mathcal W_1(\overline P_A,\overline P_B)}{\epsilon},    
\end{aligned}
\end{equation}

Denote by $C_1=\frac{288V^4(1+V^2/\sigma^2)^2}{\sigma^4}$ and $C_2=\frac{12V(V^2+\sigma^2)}{\sigma^4}$, we have
\begin{equation}
\Pr( \left| L_A(\omega) - L_B(\omega) \right| \geq \epsilon)
\le
4\exp\Big(-\frac{\ell\epsilon^2}{C_1}\Big) + \frac{C_2\mathcal{W}_1(\overline{P}_A,\overline{P}_B)}{\epsilon},
\end{equation}
which is the claimed bound.
\end{proof}

\maintheoremB*

\begin{proof}
As described in \Cref{sec:mapping}, the process of mapping a tabular dataset to a discrete-time signal in a two-dimensional projected space is performed in two steps:
\begin{enumerate}
  \item A projection and binning map $\phi:\mathbb{R}^d\to\mathbb{R}^2$ that produces a point in the two-dimensional projected space from a data sample, as shown in \Cref{eq:mapping_y} and \Cref{eq:mapping_x}.
  
  \item An aggregation operator $\psi$ that maps a finite multiset of points to a discrete-time signal of length $\ell$ by averaging the $v$-values within each bin $B_{c}$, as shown in \Cref{sec:mapping}.
\end{enumerate}

We prove the theorem in three stages:
(1) relate $\mathcal{W}_1(T_A,T_B)$ to $\mathcal{W}_1(P_A,P_B)$ with $\phi$;
(2) relate $\mathcal{W}_1(P_A,P_B)$ to $\mathcal{W}_1(\overline{P}_A,\overline{P}_B)$ with $\psi$;
and (3) control finite-sample deviations with high probability.

\paragraph{\textbf{Stage 1: Relate $\mathcal{W}_1(T_A,T_B)$ to $\mathcal{W}_1(P_A,P_B)$ with $\phi$}}

Let $T_A$, $T_B$ be two probability distributions over $\mathbb{R}^d$ satisfying $\mathcal{W}_1(T_A, T_B) \leq \zeta$. 
By the definition of the Wasserstein-1 distance under $\ell_1$ norm,
\begin{equation}
\mathcal{W}_1(T_A, T_B)
:= \inf_{\pi \in \Pi(T_A, T_B)} \mathbb{E}_{(a,b)\sim\pi}[\|a-b\|_1],
\end{equation}
where $\Pi(T_A, T_B)$ denotes the set of all couplings of $T_A$ and $T_B$.  
Since $\mathcal{W}_1(T_A, T_B) \le \zeta$, there exists a coupling $\pi$ satisfying
\begin{equation}
    \mathbb{E}_{(a,b)\sim\pi}[\|a-b\|_1] \le \zeta.
\end{equation}

Let $P_A$ and $P_B$ denote the pushforward distributions of $T_A$ and $T_B$ under $\phi$.
Denote by $\tilde{\pi}$ the induced coupling of $P_A$ and $P_B$ obtained by applying $\phi$ to $(a,b)\sim\pi$.

Then
\begin{equation}
\mathcal{W}_1(P_A, P_B)
\le \mathbb{E}_{(\phi(a),\phi(b))\sim\tilde{\pi}}[\|\phi(a)-\phi(b)\|_1].
\end{equation}
Expanding the norm gives
\begin{equation}
\label{eq:phi_decomp}
\mathcal{W}_1(P_A,P_B)
\le
\mathbb{E}_{\tilde{\pi}}[|\phi_u(a)-\phi_u(b)|]
+\mathbb{E}_{\tilde{\pi}}[|\phi_v(a)-\phi_v(b)|].
\end{equation}

Since $\phi_v(\cdot)$ is a linear projection,
\begin{equation}
|\phi_v(a)-\phi_v(b)|
= |(a-b)^\top \mathbf{e}_v|
\le \|a-b\|_2,
\end{equation}
hence
\begin{equation}
\label{eq:bd_phi_v}
\mathbb{E}_{\tilde{\pi}}[|\phi_v(a)-\phi_v(b)|]
\le \mathbb{E}_{\pi}[\|a-b\|_2].
\end{equation}

$\phi_u(\cdot)$ involves flooring and scaling, by definition:
\begin{equation}
|\phi_u(a)-\phi_u(b)|
= \frac{1}{s}\left|
\left\lfloor \frac{a^\top \mathbf{e}_u}{\beta}\right\rfloor
- \left\lfloor \frac{b^\top \mathbf{e}_u}{\beta}\right\rfloor
\right|.  
\end{equation}

Using the elementary inequality
\begin{equation}
\label{eq:floor_ineq}
\left|
\left\lfloor \frac{x}{\beta}\right\rfloor
- \left\lfloor \frac{y}{\beta}\right\rfloor
\right|
\le \frac{|x-y|}{\beta}+1,
\end{equation}
we obtain
\begin{equation}
\label{eq:bd_phi_u}
\mathbb{E}_{\tilde{\pi}}[|\phi_u(a)-\phi_u(b)|]
\le \frac{1}{s}
\left(
\frac{\mathbb{E}_{\pi}[\|a-b\|_2]}{\beta} + 1
\right).
\end{equation}

Substituting \Cref{eq:bd_phi_v} and \Cref{eq:bd_phi_u} into \Cref{eq:phi_decomp} yields
\begin{equation}
\mathcal{W}_1(P_A,P_B)
\le
\frac{1}{s}
\left(\frac{\mathbb{E}_{\pi}[\|a-b\|_2]}{\beta}+1\right)
+ \mathbb{E}_{\pi}[\|a-b\|_2].
\end{equation}
Since $\|x\|_2\le \|x\|_1$ for any $x \in \mathbb{R}^d$, we have $\mathbb{E}_{\pi}[\|a-b\|_2]\le\zeta$.  
Therefore,
\begin{equation}
\label{eq:stage1}
\mathcal{W}_1(P_A,P_B)
\le
\frac{1}{s}\left(\frac{\zeta}{\beta}+1\right)+\zeta.
\end{equation}


\paragraph{\textbf{Stage 2: Relate $\mathcal{W}_1(P_A,P_B)$ to $\mathcal{W}_1(\overline{P}_A,\overline{P}_B)$ with $\psi$.}}

Let $\overline{P}_A$ and $\overline{P}_B$ be the resulting signal distributions.  
Define their Wasserstein-1 distance as
\begin{equation}
\mathcal{W}_1(\overline{P}_A,\overline{P}_B)
=
\inf_{\overline{\pi}\in\Pi(\overline{P}_A,\overline{P}_B)}
\mathbb{E}_{(s^A,s^B)\sim\overline{\pi}}[\|s^A-s^B\|_1],
\end{equation}
where $\|s^A-s^B\|_1 = \sum_{c\in\mathcal{I}} |s^A(c)-s^B(c)|$.

We decompose the total discrepancy as
\begin{equation}
\begin{aligned} 
\sum_{c}|s^A(c)-s^B(c)|
\le
\sum_c|s^A(c)-\mu_A(c)|
+\sum_c|\mu_A(c)-\mu_B(c)| \\
+\sum_c|s^B(c)-\mu_B(c)|,
\end{aligned}
\end{equation}
where $\mu_A(c)=\mathbb{E}[v_A|u_A=c]$, $\mu_B(c)=\mathbb{E}[v_B|u_B=c]$.

We will (i) bound $\sum_c|\mu_A(c)-\mu_B(c)|$ in terms of $\mathcal{W}_1(P_A,P_B)$ and (ii) bound empirical deviations $|s^A(c)-\mu_A(c)|$ and $|s^B(c)-\mu_B(c)|$ via concentration inequalities.


Let $p_A(c)=\Pr(u_A=c)$ and $p_B(c)=\Pr(u_B=c)$ denote the marginal probabilities of bin index $c$ under $P_A$ and $P_B$, respectively, and define $p=\min_{\forall c}\{\Pr(u_A=c),\Pr(u_B=c)\}$ as the shared bin mass.  

Consider any coupling $\tilde{\pi}\in\Pi(P_A,P_B)$ between $P_A$ and $P_B$.  
Under this coupling, one can bound the binwise difference of aggregated means by the joint deviations of the paired samples.  
Specifically, using the triangle inequality and the boundedness of $v$ (i.e., $|v|\le V$), we have
\begin{equation}
\sum_c p|\mu_A(c)-\mu_B(c)|
\le
\mathbb{E}_{\tilde{\pi}}[|v^A-v^B|]
+4V\mathbb{E}_{\tilde{\pi}}[|u^A-u^B|].
\end{equation}
Then dividing both sides by $p$ gives
\begin{equation}
\label{eq:stage2}
\sum_c|\mu_A(c)-\mu_B(c)|
\le
\frac{1}{p}\left(
\mathbb{E}_{\tilde{\pi}}[|v^A-v^B|]
+4V\mathbb{E}_{\tilde{\pi}}[|u^A-u^B|]
\right).
\end{equation}
Taking infimum over all couplings $\tilde{\pi}\in\Pi(P_A,P_B)$ yields
\begin{equation}
\sum_c|\mu_A(c)-\mu_B(c)|
\le
\frac{C}{p}\,\mathcal{W}_1(P_A,P_B),
\qquad C=\max\{1,4V\}.
\end{equation}



For each bin $c$, let
\begin{equation}
    m_A(c)=\sum_{i=1}^r\mathbbm{1}\{u_i^A=c\} \text{ and } m_B(c)=\sum_{i=1}^r\mathbbm{1}\{u_i^B=c\}.
\end{equation}

Since $m_A(c)\sim\mathrm{Binomial}(r,p_A(c))$, Chernoff’s inequality\footnote{\url{https://courses.cs.washington.edu/courses/cse312/20su/files/student_drive/6.2.pdf}} gives
\begin{equation} 
\Pr\big(m_A(c)\le (1-\delta)rp\big)\le \exp\Big(-\frac{\delta^2}{2}rp\Big),
\end{equation}
and similarly for $m_B(c)$.  
Applying a union bound over $\ell$ bins yields the event $G$ defined as follows.
\begin{equation}
    G:=\Big\{\forall c \in \mathcal{I}:\ m_A(c)\ge (1-\delta)rp\ \text{and}\ m_B(c)\ge (1-\delta)rp\Big\}.
\end{equation}

By a union bound over all $c \in \mathcal{I}$,
\begin{equation}
\begin{aligned}
\Pr(G^c) &\le \sum_{c \in \mathcal{I}}\Pr(m_A(c)<(1-\delta)rp)+\sum_{c \in \mathcal{I}}\Pr(m_B(c)<(1-\delta)rp) \\
&\le 2 |\mathcal I| \exp\Big(-\frac{\delta^2}{2}rp\Big)
= 2 \ell \exp\Big(-\frac{\delta^2}{2}rp\Big).
\end{aligned}
\end{equation}
Therefore,
\begin{equation}
\label{eq:eventG}
    \Pr(G) \ge 1 - 2\ell\exp\Big(-\frac{\delta^2}{2}rp\Big)
\end{equation}

Conditional on $G$, each empirical average $s^A(c)$ (or $s^B(c)$) is a mean of at least $m_{\min}=(1-\delta)rp$ bounded random variables in $[-V,V]$.  
By Hoeffding’s inequality\footnote{\url{https://www.stat.cmu.edu/~larry/=stat700/Lecture6.pdf}},
\[
\Pr\left(|s^A(c)-\mu_A(c)|>t\mid G\right)
\le 2\exp\Big(-\frac{m_{\min}t^2}{2V^2}\Big),
\]
and similarly for $s^B(c)$.  
Applying a union bound over $2\ell$ events gives that with probability at least $1-\eta$,
\begin{equation}
\label{eq:eventH}
\forall c, \ 
|s^A(c)-\mu_A(c)|\le t \text{ and }
|s^B(c)-\mu_B(c)|\le t,
\end{equation}
where
\begin{equation}
t
= \sqrt{\frac{2V^2}{(1-\delta)rp}\log\frac{4\ell}{\eta}}.
\end{equation}

Combining \Cref{eq:stage2}, \Cref{eq:eventG}, and \Cref{eq:eventH}, and recalling that
\begin{equation}
    \sum_c|s^A(c)-s^B(c)| \le \frac{C}{p}\mathcal{W}_1(P_A,P_B)+2\ell t,
\end{equation}
we obtain the high-probability inequality
\begin{equation}
\label{eq:pre_final}
\mathcal{W}_1(\overline{P}_A,\overline{P}_B)
\le
\frac{C}{p}\mathcal{W}_1(P_A,P_B)
+2\ell\sqrt{\frac{2V^2}{(1-\delta)rp}\log\frac{4\ell}{\eta}},
\end{equation}
holding with probability at least $1-2\ell e^{-\frac{\delta^2}{2}rp}-\eta$.


\paragraph{\textbf{Stage 3: Combining the first two stages.}}
Substituting the deterministic bound from \Cref{eq:stage1} into \Cref{eq:pre_final} gives
\begin{equation}
\mathcal{W}_1(\overline{P}_A,\overline{P}_B)
\le
\frac{C}{p}\left[\frac{1}{s}\left(\frac{\zeta}{\beta}+1\right)+\zeta\right]
+
2\ell
\sqrt{\frac{2V^2}{(1-\delta)rp}\log\frac{4\ell}{\eta}}.
\end{equation}

That is to say,
\begin{equation}
\begin{aligned}
\Pr \left( \mathcal{W}_1(\overline{P}_A,\overline{P}_B)
\le\frac{C}{p}\left[ \frac{1}{s} \left( \frac{\zeta}{\beta} + 1 \right) + \zeta \right]
+2 \ell \sqrt{\frac{2V^2}{(1-\delta)rp}\log\frac{4 \ell}{\eta}} \right) \\
\geq
1 - 2 \ell \exp\Big(-\frac{\delta^2}{2}rp\Big) - \eta,
\end{aligned}
\end{equation}
\end{proof}

\subsection{Prior Studies on Radioactivity}
\label{app:prior_radioactivity}
For completeness, we briefly discuss prior studies on radioactivity and explain why these techniques cannot be directly extended to continuous-valued tabular data.
The notion of radioactivity has been explored in other modalities, including image classification, language modeling, and diffusion-based image generation~\cite{sablayrolles2020radioactive,sander2024watermarking,ghali2025radioactive,li2025hmark}.
These works study whether specific training data leave detectable traces in trained models or generated outputs.
Their objectives, threat models, and data modalities differ fundamentally from tabular dataset watermarking under retraining attacks.
In particular, they focus on watermarking models or tracing training data usage, whereas our goal is to ensure watermark persistence in regenerated tabular datasets under distribution-preserving retraining.
Another work~\cite{gu2024watermarking} studies watermark persistence under retraining attacks for categorical data. However, this approach is restricted to categorical data and cannot extend to watermarking continuous-valued tabular data.

Consequently, existing radioactive techniques developed for images, text, or categorical data cannot be directly extended to continuous-valued tabular data. To the best of our knowledge, RaMark is the first method to achieve radioactive watermarking in this setting.

\subsection{Runtime Analysis}
\label{app:exp_runtime}

\begin{table*}[t]
\centering
\caption{Embedding runtime on datasets HS and HO (in seconds). 
We report diffusion training time, standard sampling time without watermarking, and watermark embedding runtime.}
\label{tab:runtime_embed}
\resizebox{\textwidth}{!}{
\begin{tabular}{c | c c | c c c c c | c c c}
\toprule
\multirow{2}{*}{Dataset} 
& \multicolumn{2}{c|}{Diffusion backbone}
& \multicolumn{5}{c|}{Post-generation watermarking overhead}
& \multicolumn{3}{c}{Generative sampling w/ watermark} \\
\cmidrule(lr){2-3} \cmidrule(lr){4-8} \cmidrule(lr){9-11}
& Training & Sampling w/o WM
& S2R2W & TabularMark & WGTD & PKF & B2Mark
& TabWak & MUSE & RaMark \\
\midrule

HS 
& $960.000{\pm}12.500$
& $61.000{\pm}1.842$
& $0.050{\pm}1.831$
& $0.041{\pm}1.526$
& $0.035{\pm}2.136$
& $0.033{\pm}1.220$
& $\mathbf{0.031}{\pm}1.830$
& $77.890{\pm}3.115$
& $75.300{\pm}2.259$
& $\mathbf{67.248}{\pm}2.689$ \\

HO 
& $300.000{\pm}6.200$
& $15.000{\pm}0.612$
& $0.013{\pm}0.450$
& $\mathbf{0.007}{\pm}0.375$
& $\mathbf{0.006}{\pm}0.560$
& $\mathbf{0.007}{\pm}0.539$
& $\mathbf{0.007}{\pm}0.552$
& $17.941{\pm}0.757$
& $17.497{\pm}0.699$
& $\mathbf{16.071}{\pm}0.642$ \\

\bottomrule
\end{tabular}}
\end{table*}

\begin{table*}[t]
\centering
\caption{Detection runtime on datasets HS and HO (in seconds).}
\label{tab:runtime_detect}
{\fontsize{7.1pt}{9.8pt}\selectfont
\renewcommand{\arraystretch}{0.81}
\begin{tabular}{c|cccccccc}
\toprule
Dataset 
& S2R2W & TabularMark & WGTD & PKF & B2Mark 
& TabWak & MUSE & RaMark \\
\midrule
HS 
& $0.043{\pm}0.004$ 
& $0.308{\pm}0.008$ 
& $0.021{\pm}0.001$ 
& $\mathbf{0.017}{\pm}0.002$ 
& $0.018{\pm}0.002$
& $370.200{\pm}8.772$ 
& $5.250{\pm}0.204$ 
& $\mathbf{0.017}{\pm}0.001$ \\
HO 
& $0.011{\pm}0.001$ 
& $0.074{\pm}0.006$ 
& $0.005{\pm}0.000$ 
& $\mathbf{0.004}{\pm}0.000$ 
& $\mathbf{0.004}{\pm}0.000$
& $88.000{\pm}5.280$ 
& $1.240{\pm}0.099$ 
& $\mathbf{0.004}{\pm}0.000$ \\
\bottomrule
\end{tabular}
}
\end{table*}

We empirically evaluate the runtime of watermark embedding and detection for RaMark and all baseline methods, with two objectives: comparing runtime across methods and quantifying the additional overhead introduced by watermark guidance in RaMark.

To ensure reliable and stable measurements, each experiment is repeated $50$ times, and we report the mean and standard deviation of the runtime across these runs. 

\Cref{tab:runtime_embed} reports the runtimes of diffusion-model training, standard sampling without watermarking, \arev{post-generation watermarking overhead}, and generative sampling with watermarking. 
Diffusion training is a one-time cost for the data owner and is independent of the watermarking method. 
Standard sampling takes $61$ seconds on HS and $15$ seconds on HO. 

\arev{Following the taxonomy in \Cref{sec:related_work}, we categorize baseline methods into database watermarking methods and generative watermarking methods. Database watermarking methods embed the watermark after data generation, and therefore their embedding runtime appears as post-generation watermarking overhead in \Cref{tab:runtime_embed}. From \Cref{tab:runtime_embed}, we observe that database watermarking methods incur lower embedding runtime than generative methods.}

For database watermarking methods such as S2R2W, TabularMark, WGTD, PKF, and B2Mark, embedding consists of sampling from a diffusion model to generate an unwatermarked dataset and then applying the watermark embedding procedure.
The additional overhead after sampling is minimal as all five baselines require at most $0.050$ seconds on HS and $0.013$ seconds on HO. 
As a result, the end-to-end computational cost of producing a watermarked dataset is dominated by diffusion sampling. 
Moreover, once an unwatermarked dataset is sampled, database watermarking methods can reuse it to embed multiple watermarks without re-running the sampling process.
In contrast, generative methods such as TabWak, MUSE, and RaMark integrate watermark embedding into the sampling process, which requires one sampling run for each watermark. 
However, this comparison should be interpreted jointly with data utility. 
Prior studies~\cite{fang2025muse,wen2023tree} have observed that when using the same generative backbone and comparable watermark strength, generative watermarking methods can better maintain data utility than database watermarking methods, as the watermark is incorporated during generation rather than imposed through post-generation modifications. 
Moreover, as discussed in \Cref{sec:align_wm_strength}, we align watermark strength across methods using an $\mathtt{MLE}$ budget constraint in our experiments. Therefore, data utility differences are controlled and do not confound our comparison.

Among the generative methods, RaMark is slightly faster than TabWak and MUSE on both datasets. 
This difference can be attributed to how watermark embedding is integrated into the sampling process.
TabWak constructs row-wise watermarked latent seeds through self-cloning, shuffling, and a valid-bit mechanism before sampling, which increases embedding runtime.
MUSE generates multiple candidate rows for each output row and selects the highest-scoring candidate according to a watermark scoring function, which introduces repeated sampling and scoring-selection overhead. 
In contrast, RaMark only modifies the reverse denoising transition through an additional guidance term, which results in a lower embedding runtime than TabWak and MUSE.

We next analyze the computational overhead introduced by watermark guidance in RaMark.
RaMark follows a guidance-based diffusion framework, where the reverse transition is modified by incorporating an additional gradient term derived from the watermark likelihood.  
The overall runtime is dominated by the diffusion backbone, while the guidance computation introduces only a small additional cost. 
As shown in \Cref{tab:runtime_embed}, RaMark requires $67.248 \pm 2.689$ seconds on HS, compared to approximately $61$ seconds for standard sampling, corresponding to about $10.2\%$ overhead. 
A similar trend is observed on HO, where the embedding time is only slightly higher than standard sampling, with an overhead of about $7.1\%$.

For watermark detection, as shown in \Cref{tab:runtime_detect}, most methods achieve comparable and efficient runtime on both datasets. 
RaMark is efficient because detection operates directly on the suspicious dataset without requiring access to the diffusion model or reverse sampling. It maps the dataset to a discrete-time signal and computes the spectral power at the designated frequency. In contrast, TabWak exhibits significantly higher detection runtime because it requires diffusion inversion and reverse sampling to recover latent representations before detecting the watermark.

\subsection{Effects of the Mapping Parameters $\beta$ and $s$}
\label{app:param_analysis_beta_s}
In this subsection, we study the sensitivity of RaMark to the two mapping parameters in \Cref{eq:mapping_x}: the bin width $\beta$ and the scaling factor $s$. 
As explained in \Cref{sec:mapping}, these two parameters determine how data samples are mapped into the discrete-time signal.
Specifically, $\beta$ controls the granularity of the binning operation along the $u$-axis, while $s$ rescales the spacing of the resulting discrete-time signal along the $u$-axis.

Since watermark detection is performed by measuring the spectrum power at the designated frequency $\omega$, we evaluate the effect of these parameters using the detection score $DS(\omega)$. A larger $DS(\omega)$ indicates a stronger watermark signal.

To better understand why $DS(\omega)$ changes under different parameter settings, we additionally report several statistics of the constructed sinusoidal signal, including the average number of points in each bin $\text{avg}(|B_c|)$, the length of the discrete-time signal $\ell$, and the number of periods spanned by the resulting discrete-time signal. 
These statistics explain how the mapping function $\phi_u$ determines the characteristics of the sinusoidal signal used for spectral analysis, and thus help explain the change of $DS(\omega)$.
Specifically, $\text{avg}(|B_c|)$ measures how many points are aggregated to produce each mean point, as RaMark first groups points with the same $u$-coordinate into bins and then computes one mean point for each bin to form the discrete-time signal. The length of the discrete-time signal $\ell$ corresponds to the number of bins, and the number of periods reflects whether the mapped signal contains sufficient periodic structure for reliable spectral analysis.

We conduct a one-factor-at-a-time sensitivity analysis. 
When studying $\beta$, we vary $\beta \in \{0.005,0.01,0.03,0.05,0.07,0.09,0.5\}$ while fixing $s=5000$.
When studying $s$, we vary $s \in \{500,1000,3000,5000,\\7000,9000,12000\}$ while fixing $\beta=0.05$.
The projection vectors $\mathbf{e}_u$ and $\mathbf{e}_v$ are randomly sampled as a pair of orthogonal unit vectors, and the designated frequency is fixed at $\omega=30$. 

\begin{table}[t]
\centering
\caption{Effect of the bin width $\beta$ on the mapped signal and watermark detectability.}
\label{tab:param_beta}

{\fontsize{7.1pt}{9.8pt}\selectfont
\renewcommand{\arraystretch}{0.81}

\begin{tabular}{c|c|c|c|c|c}
\toprule
Dataset & $\beta$ & $\text{avg}(|B_c|)$ & $\ell$ & \#periods & $DS(\omega)$ \\
\midrule

\multirow{7}{*}{\shortstack{HS\\($s=5000$)}}
 & 0.005 & 4.2   & 11850.4 & 60.0 & 0.26 \\
 & 0.01  & 18.7  & 3124.8  & 57.2 & 0.99 \\
 & 0.03  & 52.9  & 1128.6  & 44.3 & 1.00 \\
 & 0.05  & 86.4  & 708.2   & 34.7 & 1.00 \\
 & 0.07  & 119.8 & 503.6   & 24.9 & 1.00 \\
 & 0.09  & 151.2 & 402.7   & 17.8 & 0.99 \\
 & 0.50  & 282.5 & 218.9   & 0.2  & 0.14 \\

\midrule

\multirow{7}{*}{\shortstack{HO\\($s=5000$)}}
 & 0.005 & 3.3   & 1984.6 & 44.8 & 0.20 \\
 & 0.01  & 20.4  & 642.7  & 39.6 & 0.98 \\
 & 0.03  & 55.3  & 238.5  & 29.4 & 1.00 \\
 & 0.05  & 90.6  & 142.2  & 21.6 & 1.00 \\
 & 0.07  & 121.5 & 109.3  & 15.8 & 0.99 \\
 & 0.09  & 149.6 & 91.7   & 9.9  & 0.98 \\
 & 0.50  & 259.2 & 49.8   & 0.2  & 0.12 \\

\bottomrule
\end{tabular}
}
\end{table}

\textit{Effect of the bin width $\beta$.}
As shown in \Cref{tab:param_beta}, increasing $\beta$ leads to coarser binning along the $u$-axis. 
Consequently, more points fall into each bin, which increases $\text{avg}(|B_c|)$.
Since each mean point is obtained by averaging the points within a bin, a larger $\text{avg}(|B_c|)$ makes each mean point less sensitive to the variation of individual points.
However, increasing $\beta$ also reduces the number of bins $\ell$, and thus shortens the discrete-time signal.
Meanwhile, the number of periods covered by the mapped signal also decreases.
These effects reveal a trade-off controlled by $\beta$.

When $\beta$ is very small, e.g., $\beta=0.005$, the bin width is small and each bin contains only a few points, resulting in a small $\text{avg}(|B_c|)$.
Although the signal has a large $\ell$, each mean point is determined by only a few points, and the sinusoidal pattern is less clearly reflected in the sequence of mean points.
As a result, the spectral power at the designated frequency is reduced, leading to a lower $DS(\omega)$.
When $\beta$ is very large, e.g., $\beta=0.5$, each bin contains a large number of points. However, the number of bins $\ell$ is significantly reduced, and the signal spans only a small number of periods.
In this case, the signal does not provide sufficient periodic coverage for spectral analysis at the designated frequency, which also leads to a decrease in $DS(\omega)$.
In the middle range, e.g., $\beta=0.01$ to $\beta=0.09$, the mapping achieves a good balance between $\text{avg}(|B_c|)$ and the number of periods, and $DS(\omega)$ remains consistently high.


\begin{table}[t]
\centering
\caption{Effect of the scaling factor $s$ on the mapped signal and watermark detectability.}
\label{tab:param_s}

{\fontsize{7.1pt}{9.8pt}\selectfont
\renewcommand{\arraystretch}{0.81}

\begin{tabular}{c|c|c|c|c|c}
\toprule
Dataset & $s$ & $\text{avg}(|B_c|)$ & $\ell$ & \#periods & $DS(\omega)$ \\
\midrule

\multirow{7}{*}{\shortstack{HS\\($\beta=0.05$)}}
 & 500   & 86.4 & 708.2 & 62.5 & 0.30 \\
 & 1000  & 86.4 & 708.2 & 56.8 & 0.99 \\
 & 3000  & 86.4 & 708.2 & 43.2 & 1.00 \\
 & 5000  & 86.4 & 708.2 & 34.7 & 1.00 \\
 & 7000  & 86.4 & 708.2 & 24.3 & 0.99 \\
 & 9000  & 86.4 & 708.2 & 16.5 & 0.98 \\
 & 12000 & 86.4 & 708.2 & 0.3  & 0.15 \\

\midrule

\multirow{7}{*}{\shortstack{HO\\($\beta=0.05$)}}
 & 500   & 90.6 & 142.2 & 48.7 & 0.35 \\
 & 1000  & 90.6 & 142.2 & 41.9 & 0.98 \\
 & 3000  & 90.6 & 142.2 & 30.5 & 1.00 \\
 & 5000  & 90.6 & 142.2 & 21.6 & 1.00 \\
 & 7000  & 90.6 & 142.2 & 14.8 & 0.99 \\
 & 9000  & 90.6 & 142.2 & 9.7  & 0.98 \\
 & 12000 & 90.6 & 142.2 & 0.4  & 0.18 \\

\bottomrule
\end{tabular}
}
\end{table}

\textit{Effect of the scaling factor $s$.}
The scaling factor $s$ controls the spacing of the discrete-time index along the $u$-axis after binning, and therefore directly affects how many periods of the sinusoidal signal are spanned by the mapped signal.
As shown in \Cref{tab:param_s}, changing $s$ does not affect the binning process. Therefore, $\text{avg}(|B_c|)$ and $\ell$ remain unchanged when $\beta$ is fixed.
Instead, $s$ primarily changes the number of periods covered by the signal.
As $s$ increases, the discrete-time signal becomes more compressed along the $u$-axis, and the number of periods decreases.

When $s$ is small, e.g., $s=500$, the signal spans a large number of periods.
However, the number of mean points within each period becomes relatively small, which makes the sinusoidal pattern within each period less clearly represented. 
This reduces the spectral power at the designated frequency and leads to a lower $DS(\omega)$.
When $s$ is large, e.g., $s=12000$, the signal spans only a small number of periods. 
Although each period contains more mean points, the limited number of periods makes it difficult to reliably identify the designated frequency, which also leads to a decrease in $DS(\omega)$
For moderate values such as $s \in [1000, 9000]$, the mapping achieves a good balance between the number of mean points per period and the total number of periods, and $DS(\omega)$ remains consistently high.

Overall, the results show that RaMark is robust to the choice of the mapping parameters $\beta$ and $s$.
For both parameters, the detection score $DS(\omega)$ remains consistently high over a wide range of values and only degrades at extreme settings.

\end{document}